\pgfplotsset{compat=1.17}
\DeclareSIUnit{\eur}{\euro}
\DeclareSIUnit{\usd}{USD}
\DeclareSIUnit{\mph}{mph}
\DeclareSIUnit{\month}{month}
\DeclareSIUnit{\year}{year}
\DeclareSIUnit{\million}{Mil}
\DeclareSIUnit{\mile}{mile}
\DeclareSIUnit{\car}{car}
\DeclareSIUnit{\train}{train}
\DeclareSIUnit{\mmveh}{\text{$\mu$}MV}
\DeclareSIUnit{\nounit}{-}
\newcommand{\argmin}{\mathop{\mathrm{argmin}}}
\definecolor{lightblue}{rgb}{0.60784,0.76078,0.90196}
\definecolor{darkblue}{rgb}{0.26667,0.44706,0.76863}
\definecolor{lightgreen}{rgb}{0.66275,0.81569,0.55686}
\definecolor{darkgreen}{rgb}{0.43922,0.67843,0.27843}
\definecolor{orange}{rgb}{0.92941,0.49020,0.19216}
\definecolor{yellow}{rgb}{1.00000,0.75294,0.00000}
\definecolor{grey}{rgb}{0.64706,0.64706,0.64706}
\definecolor{purple}{rgb}{0.51373,0.23529,0.04706}
\newacronym{abk:amod}{AMoD}{Autonomous Mobility-on-Demand}
\newacronym{abk:iamod}{\mbox{I-AMoD}}{intermodal \gls{abk:amod}}
\newacronym{abk:av}{\mbox{AV}}{autonomous vehicle}
\newacronym{abk:bpr}{BPR}{Bureau of Public Roads}
\newacronym{abk:bev}{BEV}{Battery Electric Vehicle}
\newacronym{abk:ca}{CA}{congestion-aware}
\newacronym{abk:cara}{CARS}{congestion-aware routing scheme}
\newacronym{abk:cpo}{CPO}{complete partial order}
\newacronym{abk:cdp}{CDP}{co-design problem}
\newacronym{abk:cdpi}{CDPI}{co-design problem with implementation}
\newacronym{abk:dp}{DP}{design problem}
\newacronym{abk:dpi}{DPI}{design problem with implementation}
\newacronym{abk:mdpi}{MDPI}{monotone design problem with implementation}
\newacronym{abk:dcpo}{DCPO}{directed complete partial order}
\newacronym{abk:es}{ES}{e-scooter}
\newacronym{abk:ffcs}{FFCS}{free floating car sharing systems}
\newacronym{abk:fcm}{FCM}{fuel-cell moped}
\newacronym{abk:ghg}{GHG}{greenhouse gas}
\newacronym{abk:icev}{ICEV}{ Internal Combustion Engine Vehicle}
\newacronym{abk:kpi}{KPIs}{Key Performance Indicators}
\newacronym{abk:lw}{LW}{Lightweight}
\newacronym{abk:mm}{{$\mu$}M}{micromobility}
\newacronym{abk:mmveh}{{$\mu$}MV}{micromobility vehicle}
\newacronym{abk:mod}{MoD}{Mobility-on-Demand}
\newacronym{abk:mcdp}{MCDP}{Monotone Co-Design Problem}
\newacronym{abk:mcfp}{MCFP}{multi-commodity flow problem}
\newacronym{abk:moped}{M}{moped}
\newacronym{abk:nyc}{NYC}{New York City}
\newacronym{abk:poset}{poset}{partially ordered set}
\newacronym{abk:ptop}{P2P}{point-to-point}
\newacronym{abk:sb}{SB}{shared bike}
\newacronym{abk:spp}{SPP}{shortest path problem}
\newacronym{abk:kdspp}{k-dSPP}{k-disjoint \gls{abk:spp}}
\newacronym{abk:su}{SU}{Sport Utility}
\newcommand{\power}[1]{\mathcal{P}(#1)}
\newcommand{\prov}{\mathsf{prov}}
\newcommand{\req}{\mathsf{reqs}}
\newcommand{\setOfFunctionalities}[1]{\F{\mathcal{F}_{#1}}}
\newcommand{\setOfFunctionalitiesOp}[1]{\F{\mathcal{F}_{#1}}^{\mathrm{op}}}
\definecolor{dpgreen}{rgb}{0.0, 0.5, 0.0}
\newcommand{\F}[1]{\textcolor{dpgreen}{#1}}
\newcommand{\setOfImplementations}[1]{\mathcal{I}_{#1}}
\newcommand{\tup}[1]{\langle #1 \rangle}
\definecolor{dpred}{rgb}{0.7, 0.0, 0.0}
\newcommand{\R}[1]{\textcolor{dpred}{#1}}
\newcommand{\setOfResources}[1]{\R{\mathcal{R}_{#1}}}
\newcommand{\tickar}{\begin{tikzcd}[baseline=-0.5ex,cramped,sep=small,ampersand replacement=\&]{}\ar[r,tick]\&{}\end{tikzcd}}
\newcommand{\tickar}{\nrightarrow}
\newcommand{\op}{^{\mathrm{op}}}
\newcommand{\true}[1]{\mathtt{T}}
\newcommand{\mat}[1]{\mathbf{#1}}
\newcommand{\positionr}{p_r}
\newcommand{\posxr}{x_\mathrm{r}}
\newcommand{\posyr}{y_\mathrm{r}}
\newcommand{\positionf}{p_\mathrm{f}}
\newcommand{\position}{p}
\newcommand{\posxf}{x_\mathrm{f}}
\newcommand{\posyf}{y_\mathrm{f}}
\newcommand{\posx}{x}
\newcommand{\posy}{y}
\newcommand{\posxt}{x_\mathrm{t}}
\newcommand{\posyt}{y_\mathrm{t}}
\newcommand{\heading}{\theta}
\newcommand{\steeringangle}{\delta}
\newcommand{\rearspeed}{v_\mathrm{r}}
\newcommand{\frontspeed}{v_\mathrm{f}}
\newcommand{\steerspeed}{v_\mathrm{s}}
\newcommand{\rearacc}{a_\mathrm{r}}
\newcommand{\angular}{\omega}
\newcommand{\evalue}[1]{\mathbb{E}\left[ #1\right]}
\newcommand{\stateunc}[3]{\mat{P}_{#1|#2}^{#3}}
\definecolor{baiocchi}{RGB}{193,221,245}
\newcommand*\circled[1]{\tikz[baseline=(char.base)]{
            \node[shape=circle,black,inner sep=1pt, fill=baiocchi] (char) {#1};}}
\tikzset{
   tick/.style={postaction={
      decorate,
      decoration={markings, mark=at position 0.5 with {\draw[-] (0,.4ex) -- (0,-.4ex);}}}
   }
}
\tikzstyle{block} = [draw, rectangle, minimum height=2em, minimum width=3em,font=\bfseries,rounded corners,thick]
\tikzstyle{block} = [draw, rectangle, minimum height=2em, minimum width=3em]
\tikzstyle{block1} = [draw, rectangle, minimum height=1.5em, minimum width=2.5em]
\tikzstyle{blockDyn} = [draw, rectangle, minimum height=2.5em, minimum width=3.5em, align=center, inner sep=10pt, thick, fill=white, copy shadow={draw=black,fill=black,opacity=1,shadow xshift=0.5ex,shadow yshift=-0.5ex}]
\tikzstyle{blockAlg} = [draw, rectangle, minimum height=1.5em, minimum width=2.5em, align=center, inner sep=10pt, thick]
\tikzstyle{sum} = [draw,circle]
\tikzstyle{nodePre} = [circle, draw,inner sep=1pt,node contents={$\preceq$},thick]
\tikzstyle{nodePreEmpty} = [circle, draw,inner sep=1pt,thick]
\tikzstyle{nodePos} = [circle, draw,inner sep=1pt,node contents={$\posceq$},thick]
\tikzstyle{nodeProd} = [rectangle, draw,inner sep=4pt,node contents={$\times$},rounded corners,thick]
\tikzstyle{nodeSum} = [rectangle, draw,inner sep=4pt,node contents={$\mathbf{+}$},rounded corners,thick]
\definecolor{red}{rgb}{0.75, 0.0, 0.0}
\tikzset{fcname/.store in =\fcname, fcname={}}
\tikzset{funame/.store in =\funame, funame={}}
\tikzset{rcname/.store in =\rcname, rcname={}}
\tikzset{runame/.store in =\runame, runame={}}
\tikzset{whereres/.store in =\whereres, whereres=0.5}
\tikzset{wherefun/.store in =\wherefun, wherefun=0.5}
\tikzset{relres/.store in =\relres, relres={above}}
\tikzset{relfun/.store in =\relfun, relfun={above}}
\tikzset{posres/.store in =\posres, posres=1}
\tikzset{posfun/.store in =\posfun, posfun=1}
\tikzset{loos/.store in =\loos, loos=2}
\tikzset{feedback/.store in =\feedback, feedback=0}
\tikzset{
   DP/.style={%
      label/.style={
         font=\everymath\expandafter{\the\everymath\scriptstyle},
         inner sep=5pt,
         node distance=2pt and -2pt},
      semithick,
      node distance=1 and 1,
      rconn/.style={color=white,opacity=0.0,postaction={decorate}, shorten <=3.2pt, shorten >= 0.8,
      decoration={markings, 
      mark= at position 0 with {
               \coordinate (a);
      },
      mark=at position .5 with
      {
              \ifthenelse{\equal{\feedback}{1}}{\def\angleOut{90}\def\angleIn{90}}{\def\angleOut{0}\def\angleIn{180}}    
              \coordinate (b);
              \draw[dashed,dpred,opacity=1.0] (a) to[out=\angleOut,in=\angleIn,looseness=\loos] 
              node[pos=\posres,\relres=\whereres mm,dpred,opacity=1,fill=white,inner sep=1pt,outer sep=1pt]{\footnotesize{\rcname}} (b);
      },
      mark= at position 1 with 
      {
             \ifthenelse{\equal{\feedback}{1}}{\def\angleOut{0}\def\angleIn{0}}{\def\angleOut{180}\def\angleIn{0}} 
              \ifthenelse{\equal{\feedback}{1}}{\def\symbol{\succeq}}{\def\symbol{\preceq}} 
              \coordinate (c);
              \draw[dpgreen,opacity=1.0] (c) to[out=\angleOut,in=\angleIn,looseness=\loos]
              node[pos=\posfun,\relfun=\wherefun mm,dpgreen,opacity=1,fill=white,inner sep=1pt,outer sep=1pt]{\footnotesize{\fcname}} (b){}; %
              \node[draw,circle,inner sep=0.5pt,color=black,fill=white,opacity=1.0] at (b) (nodepreceq) {$\symbol$}; 
      }
      }},
      runconn/.style={color=dpred,dashed,postaction={decorate},
      decoration={markings,
      mark= at position 1 with {
              \coordinate (a);
              \draw[dpred,opacity=1.0,dashed] ($(a)+(0.05,0)$) --++ (0.5,0) node[\relres,pos=\posres]{\footnotesize{\runame}};}
      }
      },
      funconn/.style={color=white,postaction={decorate},
      decoration={markings,
      mark= at position 0 with {
      \coordinate (a);
      \draw[dpgreen] ($(a)+(-0.05,0)$) -- ($(a)+(-0.5,0)$) node[\relfun, pos=\posfun]{\footnotesize{\funame}};}
      }
      },
      execute at begin picture={\tikzset{
         x=\dpx, y=\dpy,
         every fit/.style={inner xsep=\dpx, inner ysep=\dpy}}}
      },
   dpx/.store in=\dpx,
   dpx = 1.5cm,
   dpy/.store in=\dpy,
   dpy = 1.5ex,
   dp port sep/.store in=\dpportsep,
   dp port sep=2,
   dp port length/.store in=\dpportlen,
   dp port length=4pt,
   dp min width/.store in=\dpminwidth,
   dp min width=0.5cm,
   dp rounded corners/.store in=\dpcorners,
   dp rounded corners=2pt,
   dp small/.style={dp port sep=1, dp port length=2.5pt, dpx=.4cm, dp min width=.4cm, dpy=.7ex},
   dp/.code 2 args={%
      \pgfmathsetlengthmacro{\dpheight}{\dpportsep * (max(#1,#2)) * \dpy}
      \pgfkeysalso{draw,%
        minimum width=\dpminwidth,%
        minimum height=\dpheight,%
        font=\bfseries,
        outer sep=0pt,%
        inner sep=5pt,%
        rounded corners=\dpcorners,
        thick,
        prefix after command={\pgfextra{\let\fixname\tikzlastnode}},
        append after command={\pgfextra{\draw
            \ifnum #1=0{} \else foreach \i in {1,...,#1} { 
            ($(\fixname.north west)!{\i/(#1+1)}!(\fixname.south west)$) +(0,0) node[solid,left,circle,color=dpgreen,draw,fill=dpgreen,scale=0.3] {} coordinate (\fixname_fun\i) -- +(0,0) coordinate (\fixname_fun\i')}\fi %
            \ifnum #2=0{} \else foreach \i in {1,...,#2} {
            ($(\fixname.north east)!{\i/(#2+1)}!(\fixname.south east)$) +(0,0) coordinate (\fixname_res\i') -- +(0,0) node[solid,right,circle,color=dpred,draw,fill=dpred,scale=0.3] {} coordinate (\fixname_res\i)}\fi;
         }}}
         },
      dp name/.style={append after command={\pgfextra{\node[label=center,inner sep=2pt,fill=white] at (\fixname) {\textbf{#1}};}}}
   }
\theoremstyle{definition}
\newtheorem*{assumption*}{Assumption}
\newtheorem{theorem}{Theorem}
\newtheorem{lemma}[theorem]{Lemma}
\newtheorem{definition}{Definition}
\theoremstyle{remark}
\newtheorem*{remark}{Remark}
\Crefname{figure}{Fig.}{Figures}
\crefname{equation}{Eq.}{Eqs.}
        \crefname{subsection}{Section}{Sections}
        \crefname{subsubsection}{Section}{Sections}
        \crefname{assumption}{Assumption}{Assumptions}
        \crefname{problem}{Problem}{Problems}
        \crefname{subsection}{section}{sections}
        \crefname{subsubsection}{section}{sections}
        \crefname{assumption}{assumption}{assumptions}
        \crefname{problem}{problem}{problems}
\tikzset{
   tick/.style={postaction={
      decorate,
      decoration={markings, mark=at position 0.5 with {\draw[-] (0,.4ex) -- (0,-.4ex);}}}
   }
}
\tikzstyle{block} = [draw, rectangle, minimum height=2em, minimum width=3em,font=\bfseries,rounded corners,thick]
\tikzstyle{block1} = [draw, rectangle, minimum height=1.5em, minimum width=2.5em]
\tikzstyle{blockDyn} = [draw, rectangle, minimum height=2.5em, minimum width=3.5em, align=center, inner sep=10pt, thick, fill=white, copy shadow={draw=black,fill=black,opacity=1,shadow xshift=0.5ex,shadow yshift=-0.5ex}]
\tikzstyle{blockAlg} = [draw, rectangle, minimum height=1.5em, minimum width=2.5em, align=center, inner sep=10pt, thick]
\tikzstyle{sum} = [draw,circle]
\tikzstyle{blockfill} = [block,rounded corners=4,fill=white]
\tikzstyle{nodePre} = [circle, draw,inner sep=1pt,node contents={$\preceq$},thick]
\tikzstyle{nodePreEmpty} = [circle, draw,inner sep=1pt,thick]
\tikzstyle{nodePos} = [circle, draw,inner sep=1pt,node contents={$\posceq$},thick]
\tikzstyle{nodeProd} = [rectangle, draw,inner sep=4pt,node contents={$\times$},rounded corners,thick]
\tikzstyle{nodeSum} = [rectangle, draw,inner sep=4pt,node contents={$\mathbf{+}$},rounded corners,thick]
\definecolor{DPgreen}{rgb}{0.0, 0.5, 0.0}
\definecolor{red}{rgb}{0.75, 0.0, 0.0}
\newif\ifmargincomments %
\newif\ifextendedversion %
\title{
\textbf{Task-driven Modular Co-design of Vehicle Control Systems}
}
\author{Gioele Zardini, Zelio Suter, Andrea Censi, Emilio Frazzoli
\thanks{
The authors are with the Institute for Dynamic Systems and Control, ETH Z\"urich, Switzerland. ({\tt gzardini@ethz.ch})}
\thanks{
This work was supported by the Swiss National Science Foundation under NCCR Automation, grant agreement 51NF40\_180545.}
}
\begin{document}

\begin{textblock*}{\textwidth}(15mm,18mm) %
\bf \textcolor{NavyBlue}{To appear in the Proceedings of the 2022 IEEE 61th  Conference on Decision and Control}
\end{textblock*}

\maketitle
\begin{abstract}
When designing autonomous systems, we need to consider multiple trade-offs at various abstraction levels, and the choices of single (hardware and software) components need to be studied jointly.
In this work we consider the problem of designing the control algorithm as well as the platform on which it is executed.
In particular, we focus on vehicle control systems, and formalize state-of-the-art control schemes as monotone feasibility relations.
We then show how, leveraging a monotone theory of co-design, we can study the embedding of control synthesis problems into the task-driven co-design problem of a robotic platform.
The properties of the proposed approach are illustrated by considering urban driving scenarios.
We show how, given a particular task, we can efficiently compute Pareto optimal design solutions.
\end{abstract}

\section{Introduction}
The design of embodied intelligence involves the choice of material parts, such as sensors, computing units, and actuators, and software components, including perception, planning, and control routines.
While researchers mostly focused on particular problems and trade-offs related to specific disciplines in robotics, we know very little about the optimal co-design of autonomous systems.
Traditionally, the design optimization of selected components is treated in a compartmentalized manner, hindering the collaboration of multiple designers, and, importantly, modular design automation.
In particular, existing techniques do not allow one to consider simultaneously the specificity and formality of technical results for selected disciplines (e.g., decision making and perception), and more practical trade-offs related to energy consumption, computational effort, performance, and cost of developed robotic systems.
To capture both sides of the coin, one needs a comprehensive task-driven co-design automation theory, allowing multiple domains to interact, focusing both on the component and on the system level~\cite{Censi2015, censi2022}.
What control scheme should one choose to solve a specific task?
Which sensors are really needed to perform state estimation in selected situations?
How does the answer to the previous questions change when considering energetic, computational, and financial constraints?

In this work we propose an approach for the co-design of the control system and of the platform on which it is executed, focusing on the example of \glspl{abk:av}, extending our previous efforts on the subject~\cite{zardiniecc21,zardiniIros21}.

\begin{figure}[t]
\begin{center}
\includegraphics[width=\linewidth]{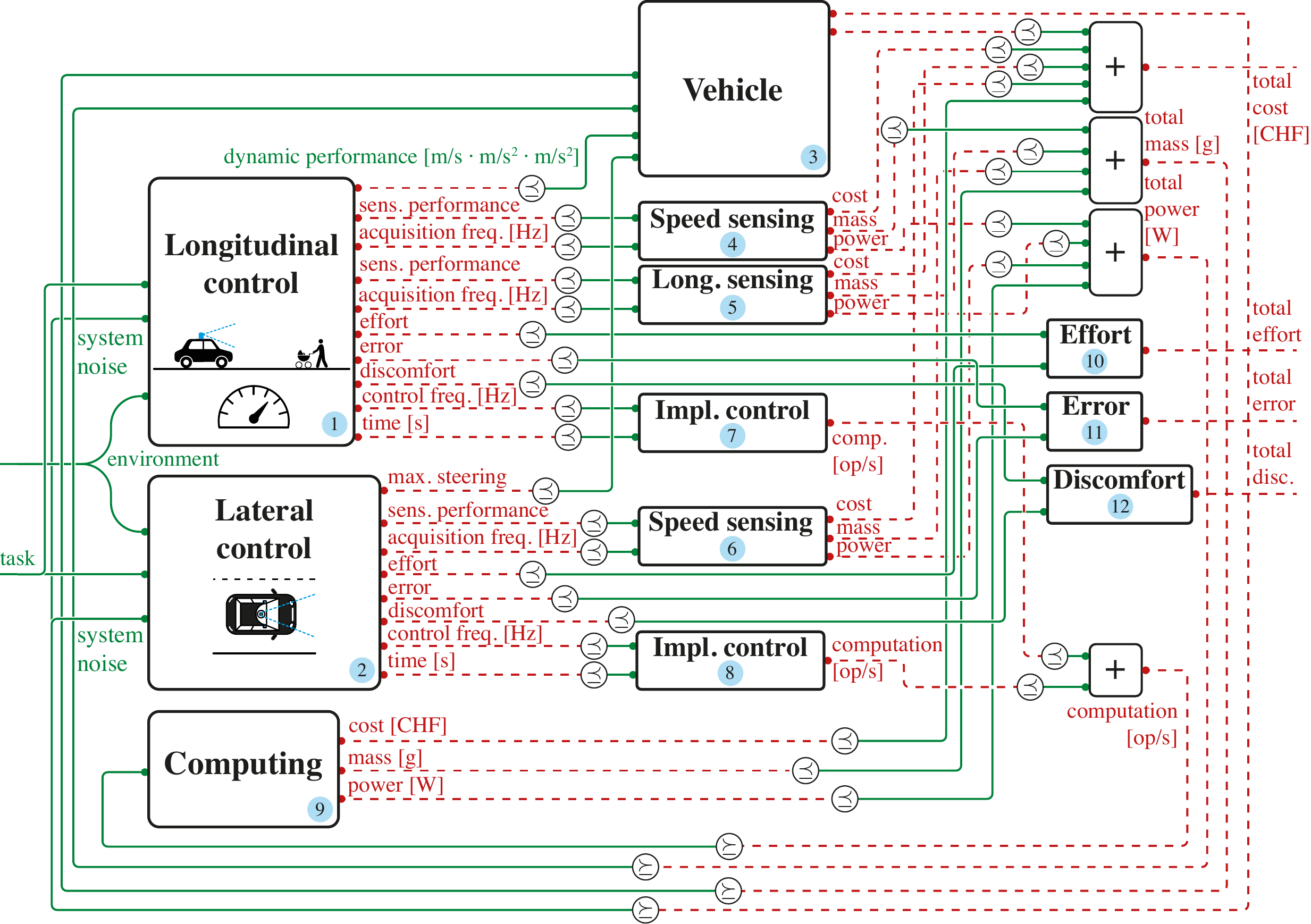}
\end{center}
\caption{In this work we illustrate how to leverage a monotone theory of co-design to embed vehicle control design in a \gls{abk:cdpi} for a full robotic platform.
The proposed results enable designers to formulate and solve the full task-driven co-design problem of a \gls{abk:av}, characterized by choices related to hardware and software components.
By fixing a task (e.g., desired speed and curvature of the trajectory considered), one can find the set of designs fulfilling the task and minimizing selected resources, including quality of execution, performance, effort (discomfort), costs, and safety.}
\label{fig:maincodesigndiag}
\end{figure}

\paragraph*{Related literature}
The need for design automation techniques has been recognized in~\cite{zhu2018codesign, zheng2016cross, Saoud2021, seshia2016design, prorok2021beyond}, and traditionally researchers have been focusing on particular instances of co-design of sensing, actuation, and control~\cite{joshi2008,gupta2006,tzoumas2020,zhang2020,Shell21,smith2021monotonicity,karaman2012,pant2021co,tanaka2015,tatikonda2004,soudbakhsh2013co,collin2019phd, collin2020autonomous,bravo2020,lahijanian2018,okane2008,merlet2005,guo2020actuator, pervan2020algorithmic}.

The trade-offs characterizing sensing and control are predominant in the literature.
Although sensor selection typically cannot be solved in a closed form, it has been shown that particular cases feature enough structure for efficient optimization schemes~\cite{joshi2008,gupta2006}.
In particular, the authors of~\cite{tzoumas2020} study sensing-constrained task-driven LQG control, and authors of~\cite{zhang2020,Shell21, smith2021monotonicity,karaman2012,pant2021co} focus on the design of robots which can solve path planning problem.
While in~\cite{tanaka2015} researchers provide a framework to jointly optimize sensor selection and control, by minimizing the information a sensor needs to acquire, authors of~\cite{tatikonda2004, soudbakhsh2013co} propose techniques for optimal control with communication constraints. 
Methodologies to co-design robot architectures have been investigated in~\cite{collin2019phd, collin2020autonomous,bravo2020}, and robot design trade-offs are presented in~\cite{lahijanian2018,okane2008,merlet2005}.
Finally~\cite{guo2020actuator} study optimal actuator placement in large-scale networks control, and~\cite{pervan2020algorithmic} proposes techniques for algorithmic design for embodied intelligence in synthetic cells.
In conclusion, while existing techniques for robotic platforms co-design highlight important trade-offs, they do not allow one to smoothly formulate and solve co-design problems involving control synthesis.

\paragraph*{Statement of contributions}
In this work we show how to leverage a monotone theory of co-design to frame the design of vehicle control strategies in the task-driven co-design of an entire \gls{abk:av}.
We show how to formulate each relevant control technique as  a \glsfirst{abk:mdpi}, characterized by feasibility relations between tasks, control performance, accuracy, measurements' and system's noises, and intermittent observations schemes.
Furthermore, we instantiate our discoveries in state-of-the-art case studies, in which we perform the task-driven co-design of an \gls{abk:av} in urban driving scenarios, extending previous basic results presented in \cite{zardiniecc21,zardiniIros21}
We illustrate how, given the model of the \gls{abk:av}, a task, and behavior specifications, we can answer several questions regarding the optimal co-design of the full robotic platform.
\section{Monotone Co-Design Theory}
\label{sec:co-design}

\begin{figure}[t]
\begin{subfigure}{\columnwidth}
    \begin{center}
    \begin{tikzpicture}[DP]
            \node[dp={2}{2}] (cnt) {MDPI};
            \draw[runconn, runame={resources}, relres=above,posres=0.9] (cnt_res1){};
            \draw[runconn, runame={}, relres=above,posres=0.9] (cnt_res2){};
            \draw[funconn, funame={functionalities},relfun=above,posfun=1.15] (cnt_fun1){};
            \draw[funconn, funame={},relfun=above,posfun=1.15] (cnt_fun2){};
\end{tikzpicture}
    \subcaption{A \gls{abk:mdpi} is a monotone relation between \glspl{abk:poset} of \F{functionalities} and \R{resources}. \label{fig:mathcodesign}}
    \end{center}
    \end{subfigure}
\begin{center}
\begin{subfigure}[b]{0.3\columnwidth}
\centering
\scalebox{0.8}{\begin{tikzpicture}[DP]
    \node[dp={1}{1}] (f) {$d$};
    \node[dp={1}{1}, right=1cm of f] (g) {$e$};
    \draw[rconn, rcname={}, fcname={}] (f_res1)  to (g_fun1);
    \draw[runconn, runame={}] (g_res1);
    \draw[funconn, funame={}] (f_fun1);
\end{tikzpicture}}
\subcaption{Series.}
\end{subfigure}
\begin{subfigure}[b]{0.3\columnwidth}
\centering
\scalebox{0.8}{\begin{tikzpicture}[DP]
    \node[dp={1}{1}] (f) {$d$};
    \node[dp={1}{1}, below=0.3cm of f] (g) {$e$};
    \draw[runconn, runame={}] (f_res1){};
    \draw[runconn, runame={}] (g_res1){};
    \draw[funconn, funame={}] (f_fun1){};
    \draw[funconn, funame={}] (g_fun1){};
\end{tikzpicture}}
\subcaption{Parallel.}
\end{subfigure}
\begin{subfigure}[b]{0.3\columnwidth}
\centering
\scalebox{0.8}{\begin{tikzpicture}[DP]
    \node[dp={2}{2}] (f) {$d$};
    \draw[runconn, runame={}] (f_res2){};
    \draw[funconn, funame={}] (f_fun2){};
    \draw[rconn,rcname={},fcname={},feedback=1,loos=3] (f_res1) -- ($(f)+(0,4)$) |- (f_fun1);
\end{tikzpicture}}
\subcaption{Loop.}
\end{subfigure}
\label{fig:diagrams}
\caption{\glspl{abk:mdpi} can be composed in different ways.}
\label{fig:dp_def}
\end{center}
\vspace{-5mm}
\end{figure}
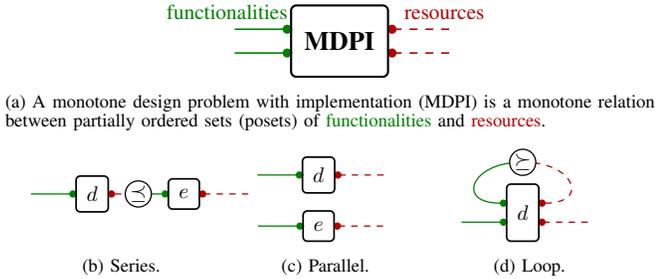
In this section, we summarize the main concepts related to the monotone theory of co-design~\cite{Censi2015, censi2022}, which will be instrumental for the development of the key results in this work.
The reader is assumed to be familiar with basic concepts of order theory~\cite{davey2002} (some of which reported for convenience in \cref{sec:app_order}).
The monotone theory of co-design is based on the atomic notion of a \gls{abk:mdpi}.

\begin{definition}[\gls{abk:mdpi}]
Given \glspl{abk:poset}~$\setOfFunctionalities{},\setOfResources{}$,  (\F{functionalities} and \R{resources}), we define a \emph{\gls{abk:mdpi}} as a tuple~$\tup{\setOfImplementations{d},\prov, \req}$, where~$\setOfImplementations{d}$ is the set of implementations, and~$\prov$,~$\req$ are maps from~$\setOfImplementations{d}$ to~$\setOfFunctionalities{}$ and~$\setOfResources{}$, respectively:
\begin{equation*}
        \setOfFunctionalities{} \xleftarrow{\prov} \setOfImplementations{d} \xrightarrow{\req} \setOfResources{}.
\end{equation*}
We compactly denote the \gls{abk:mdpi} as~$d\colon \setOfFunctionalities{} \tickar \setOfResources{}$.
Furthermore, to each \gls{abk:mdpi} we associate a monotone map~$\bar{d}$, given by:
\begin{equation*}
    \begin{split}
        \bar{d}\colon \setOfFunctionalitiesOp{} \times \setOfResources{} &\to \tup{\power{\setOfImplementations{d}},\subseteq}\\
        \langle \F{f}^*,\R{r}\rangle &\mapsto \{i \in \setOfImplementations{d}\colon (\prov(i) \succeq_{\setOfFunctionalities{}}\F{f}) \wedge (\req(i)\preceq_{\setOfResources{}}\R{r})\},
    \end{split}
\end{equation*}
where~$(\cdot)\op$ reverses the order of a \gls{abk:poset}. 
The expression~$\bar{d}(\F{f}^*,\R{r})$ returns the set of implementations (design choices)~$S\subseteq \setOfImplementations{d}$ for which \F{functionalities}~$\F{f}$ are feasible with \R{resources}~$\R{r}$.
We represent a \gls{abk:mdpi} in diagrammatic form as in \cref{fig:mathcodesign}.
\end{definition}
\begin{remark}[Monotonicity]
Consider a \gls{abk:mdpi} for which we know~$\bar{d}(\F{f}^*,\R{r})=S$.
\begin{itemize}
    \item $\F{f'}\preceq_{\setOfFunctionalities{}} \F{f}\Rightarrow \bar{d}(\F{f'}^*,\R{r})=S'\supseteq S$.
    Intuitively, decreasing the desired functionalities will not increase the required resources;
    \item $\R{r'}\succeq_{\setOfResources{}} \R{r}\Rightarrow \bar{d}(\F{f}^*,\R{r'})=S''\supseteq S$.
    Intuitively, increasing the available resources cannot decrease the provided functionalities.
\end{itemize}
For related examples and detailed explanations we refer to our draft book~\cite{censi2022}.
\end{remark}

\begin{remark}
In practical cases, one can populate the feasibility relations of \glspl{abk:mdpi} with analytic relations (e.g. cost functions, precise relationships), numerical analysis of closed-form relations, and simulations (e.g., POMDPs). For detailed examples refer to~\cite{zardiniecc21, zardiniIros21}.
\end{remark}

\begin{definition}
\label{def:h_map}
Given a \gls{abk:mdpi}~$d$, we define monotone maps
\begin{itemize}
    \item $h_d\colon \setOfFunctionalities{}\to \mathsf{A}\setOfResources{}$, mapping a functionality to the \emph{minimum} antichain of resources providing it;
    \item $h_d'\colon \setOfResources{}{}\to \mathsf{A}\setOfFunctionalities{}$, mapping a resource to the \emph{maximum} antichain of functionalities provided by it.
\end{itemize}
\end{definition}
Solving MDPIs requires finding such maps, in the sense that one of the queries consists in fixing a desired functionality, and finding the set of minimum resources providing it.
Furthermore, if such maps are Scott continuous, and posets involved are complete, one can rely on Kleene's fixed point theorem to find the solution to the queries ``fix a functionality and find minimum resources to achieve it'' and ``fix a resource and find maximum functionalities that can be achieved''.
Individual \glspl{abk:mdpi} can be composed in several ways to form a co-design problem (a multigraph of co-design problems), allowing one to decompose a large problem into smaller subproblems, and to interconnect them.
In \cref{fig:dp_def} we report the main interconnections, and an exhaustive list is presented in~\cite{censi2022}.
Series composition happens when a functionality of a \gls{abk:mdpi} is required by another \gls{abk:mdpi} (e.g., the energy provided by a battery is needed by an electric motor to produce torque).
The symbol~``$\preceq$'' is the posetal relation, which represents a co-design constraint: the resource one problem requires, cannot exceed the functionality another problem provides.
Parallel composition formalizes decoupled processes happening together, and loop composition describes feedback.\footnote{We proved that the formalization of feedback makes the category of \glspl{abk:mdpi} a traced monoidal category~\cite{censi2022}.}
Notably, \glspl{abk:mdpi} are closed under compositions (i.e., a composition of \glspl{abk:mdpi} is an \gls{abk:mdpi}).
We call the composition (multigraph) of \glspl{abk:mdpi} a \glsfirst{abk:cdpi}.
\section{Functional decomposition approach}
In this section, we report the key ingredients to understand our co-design approach, and to appreciate the results presented in \cref{sec:control}~\cite{zardiniIros21,zardiniecc21}.

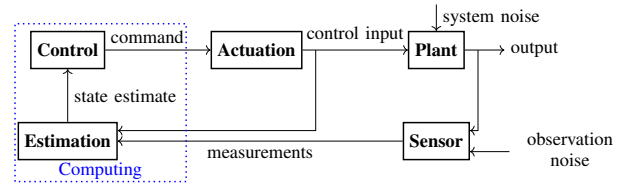
\begin{figure}[tb]
\begin{center}
\scalebox{0.7}{\begin{tikzpicture}
\node[block, rounded corners=0] (cnt) at (0,0) {Control};
\node[block,rounded corners=0, right=2cm of cnt] (act) {Actuation};
\node[block,rounded corners=0, right=2cm of act] (plant) {Plant};
\node[block,rounded corners=0, below=1cm of cnt] (est) {Estimation};
\node[block,rounded corners=0, below=1cm of plant] (sen) {Sensor};
\draw[->] (cnt) -- (act) node[pos=0.4,above]{command};
\draw[->] (act) -- (plant) node[pos=0.5,above]{control input};
\draw[->] (sen) -- (est) node[pos=0.5,below]{measurements};
\draw[->] (est) -- (cnt) node[pos=0.5,right]{state estimate};
\draw[->] ($(sen.east)+(0.75,-0.2)$) -- ($(sen.east)+(0,-0.2)$) node[pos=0,right]{\begin{tabular}{c}observation\\ noise\end{tabular}};
\draw[->] ($(plant.east)+(0,0)$) -- ($(plant.east)+(0.75,0)$) node[pos=1,right]{output};
\draw[->] ($(plant.east)+(0.25,0)$) -- ($(plant.east)+(0.25,-1.5)$)|- ($(sen.east)+(0,0.2)$);
\draw[->] ($(act.east)+(0.25,0)$) -- ($(act.east)+(0.25,-1.5)$)|- ($(est.east)+(0,0.2)$);
\draw[->] ($(plant.north)+(0,0.5)$) -- ($(plant.north)+(0,0)$)node[pos=0.5,right]{system noise};
\draw[dotted,blue,thick] (-1,-2.5) rectangle (2.25,0.5) {};
\node[blue] at (0.625,-2.3) {Computing};

\end{tikzpicture}}
\caption{Classic data-flow diagram for an autonomous system. \label{fig:signalflow}}
\end{center}
\vspace{-0.2cm}
\end{figure}

First, it is important to distinguish between data-flow and logical dependencies.
While \cref{fig:signalflow} represents the classic data-flow diagram for an autonomous system, the co-design diagram in \cref{fig:maincodesigndiag} formalizes logical dependencies through functional decomposition.
In this kind of diagrams, the ``arrows'' are inverted: decision making needs state information, which is estimated through sensing data, provided by a sensor, which will have a cost, a weight, and some power consumption.
The functional decomposition exercise results in a collection of sub-tasks, each of which can be abstracted as a \gls{abk:mdpi} (\cref{fig:dptask}).
Embodied intelligence taks, and in particular control tasks, share a common interface, constituted of task-driven functionalities, such as \F{performance} (e.g., control effort and control error, comfort, safety), and functionalities shared with other parts of the system, such as \F{scenarios/environments} (e.g., robustness to system and measurement noises, density of obstacles). 
Providing functionalities comes at a price, which in robotics can be typically expressed in terms of monetary \R{cost}, \R{power} \R{usage}, \R{computation} \R{effort}, and \R{mass} of the system.

Starting from the task abstraction, one can turn the functional decomposition into a co-design diagram, by interconnecting dependent tasks and identifying common functionalities and resources (\cref{fig:dpfundeco}).

\begin{figure}[t]
\begin{subfigure}[b]{\linewidth}
    \centering
    \scalebox{0.9}{\begin{tikzpicture}[DP, dp port sep = 1.6]
            \node[dp={2}{4}] (cnt) {Task};
            \draw[funconn, funame={environment}, relfun=left] (cnt_fun1){};
            \draw[funconn, funame={\begin{tabular}{c}function-specific\\ performance\end{tabular}}, relfun=left,posres=0] (cnt_fun2){};
            \draw[runconn, runame={cost \unit[]{[CHF]}},relres=right] (cnt_res1){};
            \draw[runconn, runame={power \unit[]{[W]}},relres=right] (cnt_res2){};
            \draw[runconn, runame={computation \unit[]{[op/s]}},relres=right] (cnt_res3){};
            \draw[runconn, runame={mass \unit[]{[g]}},relres=right] (cnt_res4){};
\end{tikzpicture}}
    \subcaption{Co-design abstraction of tasks in embodied intelligence.}
    \label{fig:dptask}
\end{subfigure}
\begin{subfigure}[b]{\columnwidth}
\centering
\includegraphics[width=\linewidth]{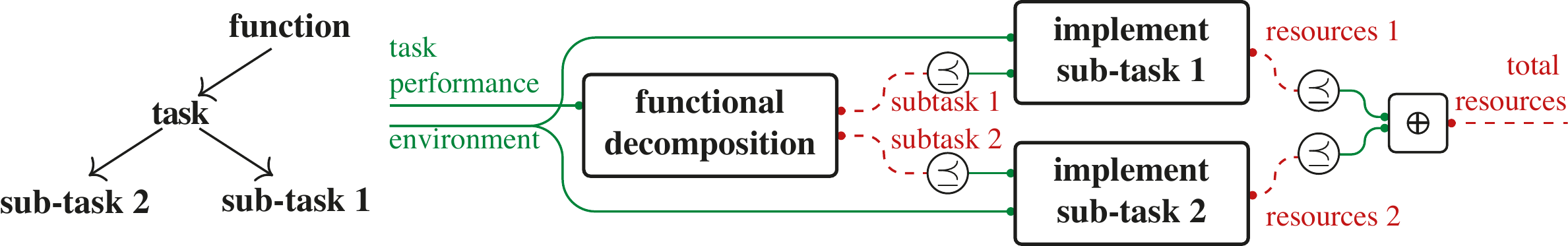}
\subcaption{Translating functional decomposition into co-design diagrams.}
\label{fig:dpfundeco}
\end{subfigure}
\caption{Functional decomposition provides us with sub-tasks, each of which we can model as a \gls{abk:mdpi} with \F{environment} and \F{task performance} as functionalities and \R{cost}, \R{power}, \R{computation}, \R{power}, and \R{mass} as resources. Interconnecting several tasks composing a functional decomposition is a \gls{abk:mdpi} (a task), and resources can be combined via an associative operation, such as~$+$ or~$\max$.}
\label{fig:dpembodiedint}
\vspace{-0.2cm}
\end{figure}

\section{Co-design of vehicle control systems}
\label{sec:control}
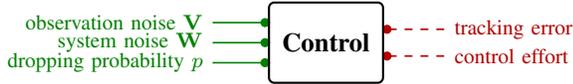
\begin{figure}[tb]
\begin{center}
\begin{tikzpicture}[DP, dp port sep = 1.5]
            \node[dp={3}{2}] (cnt) {Control};
            \draw[runconn, runame={tracking error}, relres=right] (cnt_res1){};
            \draw[runconn, runame={control effort},relres=right] (cnt_res2){};
            \draw[funconn, funame={observation noise $\mat{V}$},relfun=left] (cnt_fun1){};
            \draw[funconn, funame={system noise $\mat{W}$},relfun=left] (cnt_fun2){};
            \draw[funconn, funame={dropping probability $p$},relfun=left] (cnt_fun3){};
\end{tikzpicture}
\end{center}
\caption{General co-design problem for a control problem, highlighting the monotonicity in selected nuisances. For instance, if the probability of dropping observations increases, the tracking error will not decrease.}
\label{fig:codesigntheorems}
\vspace{-0.5cm}
\end{figure}
In~\cite{zardiniecc21} we show how to use the monotone theory of co-design presented in \cref{sec:co-design} to embed (variations of) LQG control design in an \gls{abk:mdpi} for an entire robotic platform.
While finding interesting applications, the presented results were limited to a particular control technique.
In the present work, we extend our theoretical results by showing how one can also embed the design of other important control schemes in a \gls{abk:mdpi}, in the context of designing a \gls{abk:av}.

We present the considered control techniques, as well as their interpretation as \glspl{abk:mdpi} of the form as in \cref{fig:codesigntheorems}.
Theoretical results are sketched in Lemmas, intuitive proof sketches are provided in \cref{sec:proof_sketches}, and exhaustive proofs will be reported in an extended version of this work.
The main idea is to show monotonicities between important quantities for control systems (as shown in \cref{fig:codesigntheorems}), to embed their design in \glspl{abk:mdpi} and solve co-design problems for a robotic platform.

\subsection{Vehicle model}
\paragraph*{Dynamics}
We consider the kinematic single-track model from~\cite{paden2016survey,kong2015kinematic} and extend it by considering model uncertainty.
Consider~$\positionr=[\posxr,\posyr]$ and ~$\positionf=[\posxf,\posyf]$ as the positions of the rear and front wheel with respect to the inertial coordinate frame. 
The heading~$\heading$ describes the vehicle's orientation (the angle between~$\positionf-\positionr$ and the inertial frame).
The steering angle~$\steeringangle$ describes the front wheel orientation with respect to the vehicle's one.
Finally~$l$ is the distance between front and rear axles.
We can write the no-slip condition for the wheels with the following non-holonomic constraints:
\begin{equation*}
\begin{aligned}
-\dot{\posx}_r\sin(\heading)+\dot{\posy}_r\cos(\heading)&=0,\\
-\dot{\posx}_f\sin(\heading + \steeringangle)+\dot{\posy}_f\cos(\heading+\steeringangle)&=0.
\end{aligned}
\end{equation*}
We compute the rear velocity as~$\rearspeed=\dot{\position}_r(\positionf-\positionr)/\Vert \positionf-\positionr\Vert$.
By considering the state space~$\bm{s}=[\posx_\mathrm{r},\posy_\mathrm{r},\heading,\steeringangle,v_\mathrm{r}]$, and control inputs~$\bm{u}=[v_\mathrm{s},a_\mathrm{r}]$, the dynamics read 
\begin{equation}
\label{eq:veh_dyn}
\dot{\bm{s}}=f(\bm{s},\bm{u})+\bm{w}=
\begin{bmatrix}
\rearspeed\cos(\theta)\\
\rearspeed\sin(\theta)\\
\rearspeed\tan(\steeringangle)/l\\
\steerspeed\\
\rearacc
\end{bmatrix}+\bm{w},
\end{equation}
where~$\steerspeed \in [\dot{\steeringangle}_\mathrm{min},\dot{\steeringangle}_\mathrm{max}]$ and~$\rearacc\in [\dot{v}_\mathrm{r,min}, \dot{v}_\mathrm{r,max}]$  are control inputs,~$\steeringangle\in [\steeringangle_\mathrm{min},\steeringangle_\mathrm{max}]\subset [-\pi/2,\pi/2]$,~$\rearspeed\in [v_\mathrm{r,min}, v_\mathrm{r,max}]$, and~$\bm{w}$ is a standard Brownian process with effective noise covariance~$\mat{W}$.
The motion of the front wheel is then:
\begin{equation*}
\begin{aligned}
\dot{\posx}_\mathrm{f}&=\frontspeed\cos(\theta+\steeringangle),\\
\dot{\posy}_\mathrm{f}&=\frontspeed\sin(\theta+\steeringangle),\\
v_\mathrm{f}&=\rearspeed/\cos(\steeringangle).
\end{aligned}
\end{equation*}
Furthermore, the angular velocity is~$\angular=\frontspeed\sin(\delta)/l$.

\paragraph*{Measurement model}
We consider the discrete-time measurement model
\begin{equation}
\label{eq:meas_model}
    \bm{y}_k=\gamma_k(\bm{s}_k+\bm{v}_k),
\end{equation}
where~$\gamma_k\in \{0,1\}$ represents an intermittent observations process (e.g., linear Gaussian Bernoulli, linear Gaussian Markov, and linear Gaussian semi-markov~\cite{censi2010kalman}), and~$\bm{v}_k$ is a standard Brownian process with effective noise covariance~$\mat{V}$ (parametrizing the fidelity of the measurement, i.e., the quality of the sensor).

\begin{remark}
In this work we will use, among others, the Loewner order on the set of Hermitian matrices of order~$n$, denoted~$\mathcal{M}^n$.
Take two Hermitian matrices~$\mat{A},\mat{B}$.
We say~$\mat{A}\preceq_\mathcal{L} \mat{B}$ if and only if~$\mat{B}-\mat{A}$ is positive semi-definite. 
\end{remark}

\begin{definition}
Consider sequences of Hermitian matrices~$\mat{P}$,~$\mat{Q}$ of length~$N\in \mathbb{N}$. 
We define the poset of sequences of Hermitian matrices of length~$N$ by setting~$\mat{P}\preceq_\mathcal{S} \mat{Q}$ if and only if~$\mat{P}_i\preceq_\mathcal{L} \mat{Q}_i$ for all~$i\in N$.
\end{definition}

\subsection{State estimation}
\label{sec:state_est}
Following the literature, we consider an extended Kalman filter (EKF) with the discrete-time measurement model in \cref{eq:meas_model}. 
We summarize the estimation procedure.\\
\noindent \emph{Initialization:}
    \begin{equation*}
        \hat{\bm{s}}(t_0)=\evalue{\bm{s}(t_0)},\ \stateunc{0}{0}{} = \mathbb{E}[(\bm{s}(t_0) - \hat{\bm{s}}(t_0))(\bm{s}(t_0) - \hat{\bm{s}}(t_0))^\intercal].
    \end{equation*} 
\emph{Prediction update:} Solve
\begin{equation*}
\begin{aligned}
\dot{\hat{\bm{s}}}(t)&=f(\hat{\bm{s}}(t),\bm{u}(t)),\\
\dot{\mat{P}}(t)&=\mat{F}(t)\mat{P}(t)+\mat{P}(t)\mat{F}(t)^\intercal+\mat{W}(t),\ \mat{F}(t)=\frac{\partial f}{\partial \bm{s}}\big|_{\hat{\bm{s}}(t),\bm{u}(t)},
\end{aligned}
\end{equation*}
with~$\hat{\bm{s}}(t_{k-1})=\hat{\bm{s}}_{k-1|k-1}$, and~$\mat{P}(t_{k-1}) = \stateunc{k-1}{k-1}{}$, to obtain~$\hat{\bm{s}}_{k|k-1}=\hat{\bm{s}}(t_k)$ and~$\stateunc{k}{k-1}{}=\mat{P}(t_k)$.

\noindent \emph{Measurement update:} Compute
\begin{equation*}
\begin{aligned}
\mat{K}_k&=\stateunc{k}{k-1}{}(\stateunc{k}{k-1}{}+\mat{V})^{-1},\\
\hat{\bm{s}}_{k|k}&=\hat{\bm{s}}_{k|k-1}+\mat{K}_k(\bm{y}_k-\hat{\bm{s}}_{k|k-1}),\\
\stateunc{k}{k}{}&=(\mat{I}-\mat{K}_k)\stateunc{k}{k-1}{},
\end{aligned}
\end{equation*}
where~$\mat{P}$ represents the covariance estimate.

Given this model, one can prove the monotonicity of the covariance estimate with respect to the noise and measurement covariances, and to the probability of loosing observations.
These results will be important when building the control \glspl{abk:mdpi}.
\begin{lemma}
\label{lem:estmon}
The sequence of covariance estimates~$\mat{P}$ produced by the EKF is monotone in~$\mat{V}$ and~$\mat{W}$. In other words:
\begin{equation*}
\tup{\mat{V},\mat{W}}\preceq_{\mathcal{L}\times \mathcal{L}} \tup{\mat{V}',\mat{W}'}\Rightarrow \mat{P}(\mat{V},\mat{W})\preceq_\mathcal{S} \mat{P}(\mat{V}',\mat{W}').
\end{equation*}
\end{lemma}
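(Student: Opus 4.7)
The plan is to proceed by induction on the time index $k$, exploiting the fact that the EKF covariance recursion factors into a prediction step (a Lyapunov-type ODE driven by $\mat{W}$) followed by a measurement update (a parallel-sum-type operation involving $\mat{V}$), and to show that each of these two maps is Loewner-monotone in its arguments. The base case is immediate since $\mat{P}(t_0)$ is independent of both $\mat{V}$ and $\mat{W}$.

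For the prediction step, I would write the closed-form solution of $\dot{\mat{P}} = \mat{F}\mat{P}+\mat{P}\mat{F}^\intercal+\mat{W}$ on $[t_{k-1},t_k]$ via the state-transition matrix $\Phi$ of $\mat{F}$:
\begin{equation*}
\stateunc{k}{k-1}{} \;=\; \Phi(t_k,t_{k-1})\,\stateunc{k-1}{k-1}{}\,\Phi(t_k,t_{k-1})^\intercal + \int_{t_{k-1}}^{t_k}\Phi(t_k,\tau)\,\mat{W}(\tau)\,\Phi(t_k,\tau)^\intercal\,d\tau.
\end{equation*}
Since the congruence transformation $\mat{M}\mapsto\mat{X}\mat{M}\mat{X}^\intercal$ and the Bochner integral both preserve $\preceq_\mathcal{L}$, the predicted covariance $\stateunc{k}{k-1}{}$ is monotone in both $\stateunc{k-1}{k-1}{}$ and $\mat{W}$.

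For the measurement update, I would apply the matrix inversion lemma to rewrite
\begin{equation*}
\stateunc{k}{k}{} \;=\; (\mat{I}-\mat{K}_k)\,\stateunc{k}{k-1}{} \;=\; \bigl((\stateunc{k}{k-1}{})^{-1} + \mat{V}^{-1}\bigr)^{-1},
\end{equation*}
i.e., the parallel sum of $\stateunc{k}{k-1}{}$ and $\mat{V}$. Since matrix inversion on positive-definite matrices is operator antitone (Loewner), composing it twice yields an operator monotone map in each argument; equivalently, the parallel sum is jointly operator monotone (Ando). Hence $\stateunc{k}{k-1}{}\preceq_\mathcal{L}\stateunc{k}{k-1}'{}$ and $\mat{V}\preceq_\mathcal{L}\mat{V}'$ imply $\stateunc{k}{k}{}\preceq_\mathcal{L}\stateunc{k}{k}'{}$. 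Composing the two monotone steps and propagating by induction delivers $\mat{P}(\mat{V},\mat{W})\preceq_\mathcal{S}\mat{P}(\mat{V}',\mat{W}')$.

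The main obstacle I foresee lies in the EKF linearization: $\mat{F}(t)=\partial f/\partial\bm{s}\big|_{\hat{\bm{s}}(t),\bm{u}(t)}$ depends on the estimated state, which is itself a function of the noise realizations, so a priori different choices of $\mat{V},\mat{W}$ give different $\mat{F}$ and the two covariance sequences are not driven by the same Lyapunov recursion. To keep the statement clean I would fix the linearization along a common nominal trajectory (so $\mat{F}$ is exogenous to the design parameters), which is the standard convention in EKF covariance design and is consistent with the task-driven, precomputation setting of the paper; the intermittent-observations case $\gamma_k\in\{0,1\}$ is then handled by noting that a dropped measurement simply skips the update step, which preserves monotonicity trivially.
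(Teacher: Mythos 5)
Your proposal is correct and follows essentially the same route as the paper's own proof: induction on the time index, Loewner-monotonicity of the prediction step via the (differential) Lyapunov equation, and monotonicity of the measurement update, here made explicit through the parallel-sum form $\bigl((\stateunc{k}{k-1}{})^{-1}+\mat{V}^{-1}\bigr)^{-1}$. You go somewhat further than the paper's sketch by explicitly flagging that $\mat{F}(t)$ depends on the estimated trajectory (and hence on the noise parameters) and by proposing the standard fix of linearizing along a common nominal trajectory --- a genuine subtlety that the paper's sketch leaves implicit.
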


\begin{lemma}
\label{lem:loosing_ekf}
The sequence of covariance estimates~$\mat{P}$ is monotone in the probability of dropping observations.
\end{lemma}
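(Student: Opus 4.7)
The plan is to use a stochastic coupling argument that reduces the statement to the monotonicity in the measurement noise covariance that was just established in \cref{lem:estmon}. The guiding observation is that dropping an observation at step~$k$ is operationally equivalent to performing a measurement update with an arbitrarily large $\mat{V}$: in the limit $\mat{V}\to\infty$ the Kalman gain $\mat{K}_k\to \mat{0}$ and $\stateunc{k}{k}{} = \stateunc{k}{k-1}{}$, matching the standard convention used for $\gamma_k=0$ in the intermittent-observations literature (\emph{cf.}~\cite{censi2010kalman}). Thus, $\gamma_k$ can be absorbed into an effective, step-dependent measurement-noise covariance $\mat{V}_k$ that is ``finite'' when an observation is received and ``infinite'' when it is dropped.

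First, I would construct an explicit monotone coupling of two intermittent-observation processes with drop probabilities $p\leq p'$ on a common probability space so that $\gamma_k'(\omega)\leq \gamma_k(\omega)$ for almost every~$\omega$. For the Bernoulli case this is immediate by thresholding a single uniform random variable at $p$ and $p'$; for the Markov and semi-Markov models in~\cite{censi2010kalman}, the same can be done layer-by-layer using the usual stochastic-ordering construction. Second, on each coupled sample path, the effective measurement-noise sequence of the $\gamma'$-system dominates that of the $\gamma$-system in the Loewner order stepwise. Invoking a stepwise version of \cref{lem:estmon} and proceeding by induction on~$k$ through the alternating prediction and update steps, one obtains $\mat{P}(\gamma)(\omega)\preceq_\mathcal{S}\mat{P}(\gamma')(\omega)$ almost surely. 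Taking expectations (or interpreting the conclusion as a Loewner-order stochastic dominance between the random covariance sequences) gives the monotonicity in the drop probability.

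The main obstacle will be handling cleanly two subtleties that are hidden inside the induction: (i) the EKF Jacobian $\mat{F}(t)=\partial f/\partial \bm{s}|_{\hat{\bm{s}},\bm{u}}$ is evaluated along the realized estimate trajectory, so the two coupled systems linearize at different points, and (ii) the limit $\mat{V}\to\infty$ must be taken without producing singular inverses. For (i), the Riccati update $\stateunc{k}{k-1}{}\mapsto \stateunc{k}{k-1}{} - \stateunc{k}{k-1}{}(\stateunc{k}{k-1}{}+\mat{V})^{-1}\stateunc{k}{k-1}{}$ is order-preserving in its covariance argument for \emph{any} choice of Jacobian, so the induction can be carried out by comparing the covariance recursions under the worst-case Jacobian on each side; this decouples the monotonicity argument from the state estimate itself, as it was already done implicitly in the proof of \cref{lem:estmon}. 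For (ii), rewriting the measurement update in information form as $\stateunc{k}{k}{-1}=\stateunc{k}{k-1}{-1}+\gamma_k\mat{V}^{-1}$ makes monotonicity in $\gamma_k$ transparent and avoids any ill-defined inverse, thus completing the sketch.
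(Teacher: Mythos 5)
Your proposal is correct and follows essentially the same route as the paper: the monotone coupling of the observation-dropping processes is precisely the paper's ``substitution principle'' (the system with fewer drops can emulate the one with more by ignoring samples), and your information-form comparison of the update with $\gamma_k\in\{0,1\}$ is the paper's ``analytic comparison of measurement updates,'' here made explicit by reduction to \cref{lem:estmon}. Your treatment of the two subtleties (state-dependent Jacobians and the degenerate update for dropped observations) goes beyond the paper's sketch but does not change the underlying argument.
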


\subsection{Control}
\label{sec:control_part}
We instantiate the functional decomposition approach for the self-driving task of a \gls{abk:av}. 
In particular, one can decompose this task into lateral and longitudinal control (\cref{fig:fun_dec}).
Longitudinal control can be then split into speed tracking and emergency braking.

\begin{figure}[t]
\centering
\begin{subfigure}[b]{\linewidth}
    \centering
    \includegraphics[width=0.95\linewidth]{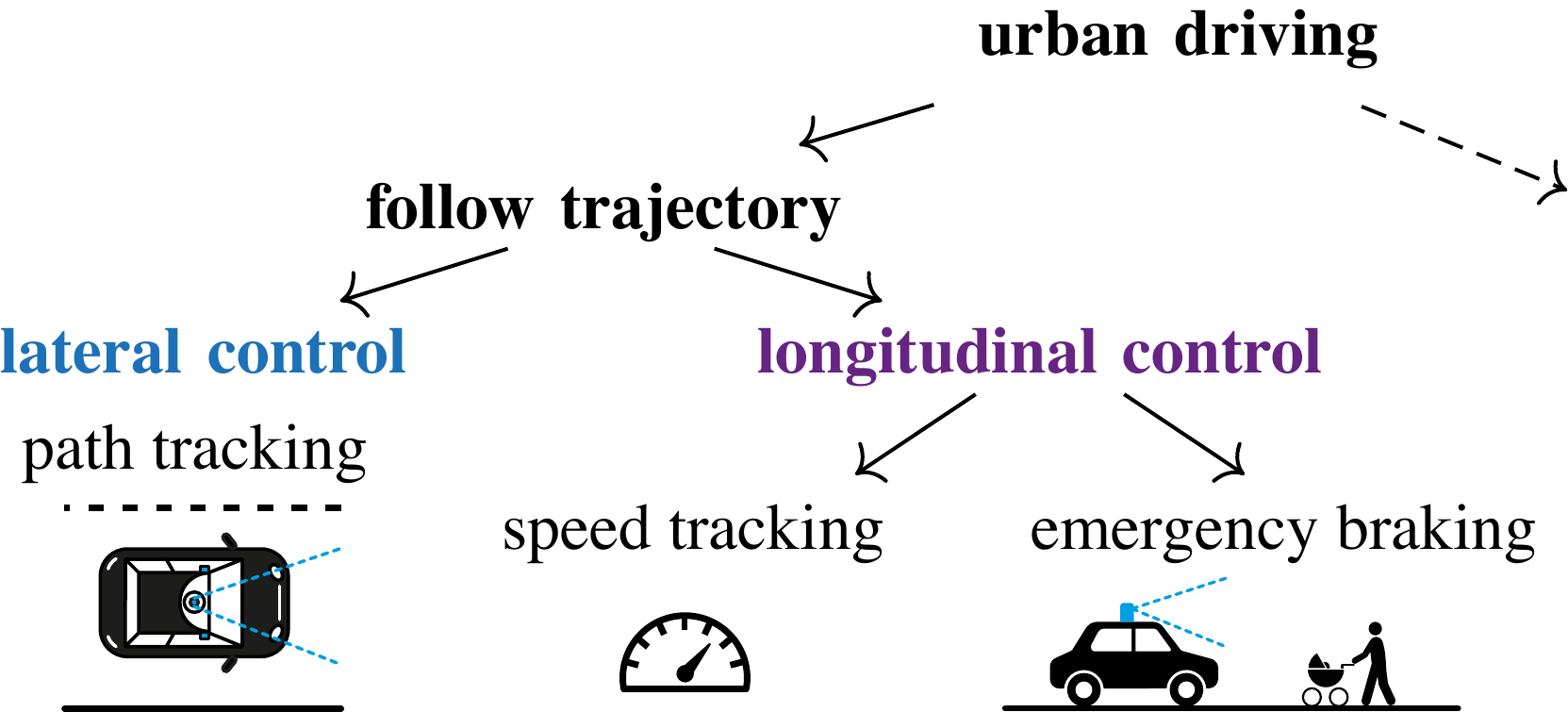}
    \subcaption{Functional decomposition for urban driving.}
    \label{fig:fun_dec}
\end{subfigure}
~
\begin{subfigure}[b]{0.48\linewidth}
    \begin{tikzpicture}
    \node at (0,0) {\includegraphics[width=0.95\linewidth]{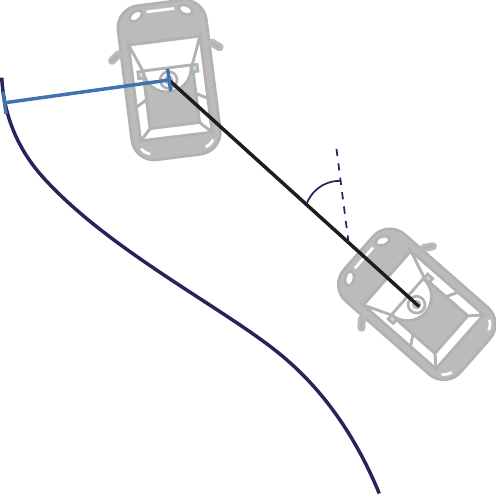}};
    \node at (0.5,0.7) {\small{$\heading_e$}};
    \node at (-1.5,1.45) {\small{$e_p$}};
    \end{tikzpicture}
    \subcaption{Stanley control.}
    \label{fig:stanley}
\end{subfigure}
\begin{subfigure}[b]{0.48\linewidth}
    \begin{tikzpicture}
    \node at (0,0) {\includegraphics[width=\linewidth]{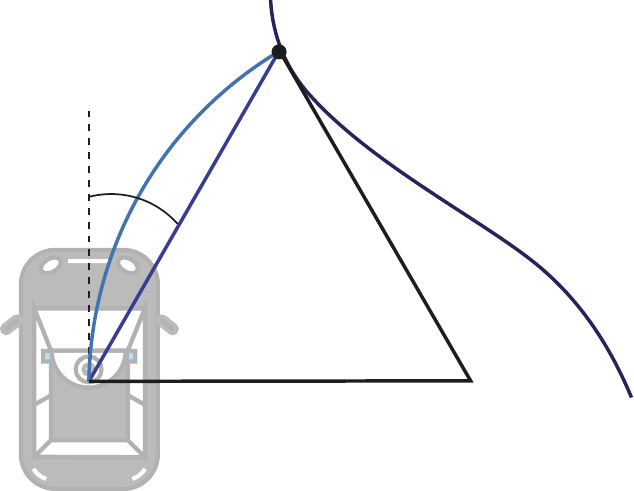}};
    \node at (-1.1,0.1) {\small{$\alpha$}};
    \node at (-0.55,0.4) {\small{$L$}};
    \node at (0.3, 1.25) {\small{$\tup{\posxt,\posyt}$}};
    \end{tikzpicture}
    \subcaption{Pure pursuit control.}
    \label{fig:pure_pursuit}
\end{subfigure}
\caption{(a) Functional decomposition for the self-driving task of a \gls{abk:av}. (b,c): Geometric lateral control with Stanley and pure pursuit techniques.}
\label{fig:pp_stanley}
\end{figure}
\subsubsection{Lateral control}
The lateral control action can be formulated as choosing the steering velocity to track a given reference path.
The vehicle's speed is typically considered constant at each time step (decoupling the longitudinal and lateral control problems).
Controllers consider the first four states/equations of \cref{eq:veh_dyn} (denoted by~$\bm{z}$) and receive a state estimate at discrete times~$\hat{\bm{z}}_k=\bm{z}_k+\bm{\mu}_k$, where~$\bm{\mu}_k$ is a standard Brownian process with effective noise covariance~$\mat{P}_k$ (as per state estimation procedure and measurement model).
The control error~$\bm{e}(t)=\begin{bmatrix} e_p(t)&\heading_e(t)\end{bmatrix}^\intercal$ is generally expressed as the distance between the vehicle's front axle and the reference point on the path, and the angle between the vehicle's heading and the tangent to the path at the reference point.
In the following, we list the standard lateral control techniques we considered, and their properties~\cite{paden2016survey, rokonuzzaman2021review}.
\paragraph*{Stanley control}
This is a geometric type of vehicle control based on the single-track bicycle model (i.e., the orientation and position of the front wheel with respect to the reference path are considered for generating control actions).
Given the error, one can write the desired steering angle at any time as~$\steeringangle(t)=\heading_e(t)+\arctan(g e_p(t)/v_\mathrm{f})$, where~$g$ is the Stanley gain (\cref{fig:stanley}).
We denote by~$e_{p,\mathrm{tot}}$ the total control positional error and by~$\steeringangle_\mathrm{tot}$ the total control effort along a path.

\begin{lemma}
\label{lem:stanley_mon_error}
The total Stanley control lateral tracking error~$e_{p,\mathrm{tot}}$ is monotonic in~$\mat{W}$ and the sequence of estimate covariances~$\mat{P}$.
\end{lemma}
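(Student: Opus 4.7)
The plan is to establish the two monotonicities separately, each via a coupling argument that reduces the stochastic statement to a deterministic comparison between closed-loop trajectories. First I would fix the reference path and view $e_{p,\mathrm{tot}}$ as a measurable functional of the two stochastic processes entering the loop: the system noise $\bm{w}$ (with covariance $\mat{W}$ in \cref{eq:veh_dyn}) and the estimation-error sequence $\bm{\mu}_k$ with covariances $\mat{P}_k$ coming from the EKF of \cref{sec:state_est}. The Stanley law $\steeringangle(t)=\heading_e(t)+\arctan(g\,e_p(t)/v_\mathrm{f})$ is a measurable function of the estimated state, so the closed loop induces a well-defined map from noise sample paths to the path-integrated $|e_p(t)|$.

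For the monotonicity in $\mat{W}$, I would suppose $\mat{W}\preceq_\mathcal{L}\mat{W}'$ and construct, on a common probability space, a coupling $\bm{w}'=\bm{w}+\tilde{\bm{w}}$ with $\tilde{\bm{w}}$ an independent Brownian process of covariance $\mat{W}'-\mat{W}\succeq 0$. The closed-loop state driven by $\bm{w}'$ is then an additively perturbed version of the one driven by $\bm{w}$, so a comparison argument for the kinematic bicycle under Stanley feedback shows that in expectation the path-integrated lateral deviation is non-decreasing in the magnitude of the driving noise. For the monotonicity in the sequence $\mat{P}$, an entirely analogous construction applies: if $\mat{P}\preceq_\mathcal{S}\mat{P}'$, I would couple $\bm{\mu}_k'=\bm{\mu}_k+\tilde{\bm{\mu}}_k$ with $\tilde{\bm{\mu}}_k$ independent and zero-mean of covariance $\mat{P}_k'-\mat{P}_k$, so that the primed controller acts on a strictly noisier proxy of the true state, which in expectation worsens the steering command relative to the error-minimizing one and hence increases the total lateral error.

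The main obstacle is the nonlinearity introduced by the $\arctan$ law and by the nonholonomic kinematics in \cref{eq:veh_dyn}: these block any direct linear-superposition argument, so the coupling must be turned into a statement about a monotone functional of positive-semidefinite perturbations. To handle this I would linearize the closed-loop error dynamics around the nominal zero-error trajectory (vehicle on the reference path at the nominal speed) and invoke the classical fact that Loewner ordering of zero-mean Gaussian covariances induces the convex order, so that $\mathbb{E}[\phi(\bm{w})]\le\mathbb{E}[\phi(\bm{w}')]$ for every convex $\phi$; the tracking-error functional, a path integral of $|e_p(t)|$ that is to leading order a linear image of the noise, belongs to this class, which transfers the pointwise comparison to expectations and yields the claim. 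A full treatment, including the saturation bounds on $\steeringangle$ and $\rearspeed$ and the extension beyond the linearization regime, is deferred to the extended version.
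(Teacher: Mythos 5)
Your overall skeleton matches the paper's: both arguments pivot on the closed-loop lateral error dynamics under the Stanley law and then argue that enlarging the noise covariances cannot shrink the expected lateral error. Where you differ is in the mechanism for the second step. The paper derives the error dynamics explicitly (its \cref{eq:error_stanley}, with $\dot{e}_p=\frontspeed\sin(\heading_e-\steeringangle)+\rho_e$ and $\dot{\heading}_e=-\frontspeed\sin(\steeringangle)/L+\bm{w}_\theta$) and asserts a \emph{pointwise-in-time} comparison of the expected lateral error for larger $\mat{W}$ or $\mat{P}$ at parity of initial condition; you instead work directly with the \emph{integrated} functional $e_{p,\mathrm{tot}}$ and make the comparison concrete via a Gaussian coupling ($\bm{w}'=\bm{w}+\tilde{\bm{w}}$, $\bm{\mu}_k'=\bm{\mu}_k+\tilde{\bm{\mu}}_k$) plus the convex-order consequence of the Loewner order. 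Your route has the virtue of being checkable: coupling plus Jensen is exactly the kind of ``substitution principle'' the paper itself invokes for \cref{lem:loosing_ekf}, and it identifies precisely what structural property of the noise-to-error map is needed (convexity). The paper's route, if it could be carried out, would be stronger (monotonicity of the error distribution at every time, not just of one integrated statistic), but the sketch does not say how the nonlinearity is handled either.

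The one point you should not gloss over is that your convex-order step genuinely requires the noise-to-$e_p(\cdot)$ map to be linear (or at least convex), and this is exactly what the $\arctan$ law, the $\sin$ terms in the error dynamics, and the saturation bounds on $\steeringangle$ destroy. Linearizing around the zero-error trajectory therefore does not merely defer a technicality: for general nonlinear feedback loops, added noise can \emph{improve} expected tracking (dither effects through deadzones and saturations are the classic counterexample), so the lemma is not a corollary of the coupling alone. Either you must show that the specific Stanley closed loop preserves a stochastic ordering of $|e_p(t)|$ nonlinearly (e.g., via a pathwise comparison theorem for the SDE \cref{eq:error_stanley}), or you should state the lemma as holding in the small-error regime where the linearization is valid. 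As written, your proposal proves a linearized surrogate of the claim, which is roughly the same level of completeness as the paper's own sketch, but the caveat deserves to be stated as a hypothesis rather than a deferred extension.
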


\begin{lemma}
\label{lem:stanley_mon_effort}
The total Stanley control effort~$\steeringangle_\mathrm{tot}$ is monotonic in~$\mat{W}$ and the sequence of estimate covariances~$\mat{P}$.
\end{lemma}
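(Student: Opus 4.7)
The plan is to reduce \cref{lem:stanley_mon_effort} to \cref{lem:stanley_mon_error}, together with an analogous pathwise monotonicity for the heading error $\theta_e$, by exploiting the closed-form Stanley law
\[
\delta(t)=\theta_e(t)+\arctan\!\Bigl(\tfrac{g\,e_p(t)}{v_\mathrm{f}}\Bigr),
\]
whose magnitude is a monotone non-decreasing function of $|e_p(t)|$ and $|\theta_e(t)|$ (since $\arctan$ is odd and monotone on $\mathbb R$). Concretely, $|\delta(t)|\le |\theta_e(t)|+\arctan(g|e_p(t)|/v_\mathrm{f})$, and the right-hand side is monotone in each argument.

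The first step is to make precise what $\delta_\mathrm{tot}$ aggregates: either an expected $L^1$ or $L^2$ norm of $\delta(\cdot)$ along the path, or equivalently an expected sum of $|\delta_k|$ at the discrete control updates. All these aggregations preserve the ordering of pointwise magnitudes under expectation and integration, so it suffices to establish the pointwise (in $t$, almost sure) ordering of $(|e_p|,|\theta_e|)$ induced by the partial orders on $(\mat W,\mat P)$.

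The second step is to reuse the coupling constructed for \cref{lem:stanley_mon_error}. I would place both closed loops on a common probability space driven by the same underlying standard Brownian motions and measurement innovations, and instantiate two copies with parameters $(\mat W,\mat P)$ and $(\mat W',\mat P')$ satisfying $\mat W\preceq_{\mathcal L}\mat W'$ and $\mat P\preceq_{\mathcal S}\mat P'$. Under this coupling, \cref{lem:stanley_mon_error} already gives the required ordering for $|e_p|$; the same coupling, propagated through \cref{eq:veh_dyn} and through \cref{lem:estmon}, \cref{lem:loosing_ekf}, orders the full error covariance in the Loewner sense, hence orders $|\theta_e|$ in second moment. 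Combining the two orderings and applying the monotone bound on $|\delta(t)|$ inside the expectation yields $\delta_\mathrm{tot}(\mat W,\mat P)\le \delta_\mathrm{tot}(\mat W',\mat P')$.

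The main obstacle will be propagating the ordering through the feedback loop: the steering command itself depends on the noisy estimate and feeds back into the state via \cref{eq:veh_dyn}, so one must argue that increasing $(\mat W,\mat P)$ cannot create compensating reductions of $|\theta_e|$ at some $t$. The cleanest way is to reuse the coupling from \cref{lem:stanley_mon_error} verbatim, so that the sample-path comparison is inherited rather than rebuilt, and then handle $\theta_e$ by the same Loewner-propagation argument used for $e_p$. Making this inheritance rigorous (and specifying whether the monotonicity is in expectation, in second moment, or pathwise) is the only delicate point; the remaining calculation is a direct application of monotonicity of the aggregate functional.
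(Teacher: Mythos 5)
Your plan matches the paper's own argument: the paper likewise reduces effort monotonicity to the monotonicity of the lateral and orientation errors (from \cref{lem:stanley_mon_error} and the error dynamics in \cref{eq:error_stanley}) and then takes the expectation of the closed-form Stanley law $\steeringangle(t)=\heading_e(t)+\arctan(g e_p(t)/v_\mathrm{f})$, exactly as you propose. Your additional care about the coupling and about needing the heading-error ordering separately is a sound elaboration of the same route, not a different one.
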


\paragraph*{Pure Pursuit}
Given a reference path, the control law fits a semi-circle through the vehicle's current configuration to a point on the reference path, which has a distance (called ``lookahead'')~$L$ from the car (\cref{fig:pure_pursuit}).
We consider the algorithm presented in~\cite{coulter1992implementation}, and extend it by requiring the vehicle's heading to be tangent to the circle. 
The curvature of the semi-circle is~$\kappa=2\sin(\alpha)/L$.
Given a constant rear velocity~$\rearspeed$, the angular velocity of a vehicle following the semi-circle is~$\dot{\heading}=2\rearspeed\sin(\alpha)/L$.
From~\cref{eq:veh_dyn}, one has~$\delta=\arctan(2l\sin(\alpha)/L)$, where~$\alpha$ is the angle between the vehicle's orientation and the vector from the current configuration~$\tup{\posx,\posy}$ and the target one~$\tup{\posxt,\posyt}$.
Again, the control error~$\bm{e}(t)=[ e_p(t),\heading_e(t)]$ is expressed as the distance between the vehicle's front axle and the reference point on the path, and the angle between the vehicle's heading and the tangent to the path at the reference point.
The control procedure is then: a) Find current location of the vehicle, b) find the path-point closest to the vehicle, c) find the goal point, d) transform it to the vehicle coordinates, e) calculate the curvature and set the steering angle accordingly, f) update vehicle's position~\cite{rokonuzzaman2021review}.

\begin{lemma}
\label{lem:pp_tracking}
The total pure pursuit control lateral tracking error~$e_{p,\mathrm{tot}}$ is monotonic in~$\mat{W}$ and in the sequence of estimate covariances~$\mat{P}$.
\end{lemma}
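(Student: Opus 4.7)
The plan is to adapt the proof strategy of \cref{lem:stanley_mon_error} to the pure pursuit setting, exploiting the fact that both geometric controllers consume the same noisy estimated state $\hat{\bm{z}}_k = \bm{z}_k + \bm{\mu}_k$ with $\mathrm{Cov}(\bm{\mu}_k) = \mat{P}_k$ produced by the EKF of \cref{sec:state_est}. First, I would express the per-time-step lateral error as the nonnegative scalar functional $e_p(t) = \mathrm{dist}(\bm{z}(t), \text{reference path})$ and accumulate the total as $e_{p,\mathrm{tot}} = \int_0^T e_p(t)\,\dt$ (or its discrete-time analog). The claim is then recast as monotonicity of $\mathbb{E}[e_{p,\mathrm{tot}}]$ in $\mat{W}$ under $\preceq_{\mathcal{L}}$ and in $\mat{P}$ under $\preceq_{\mathcal{S}}$.

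Next, I would introduce a coupling: given $\mat{W} \preceq_{\mathcal{L}} \mat{W}'$ and $\mat{P} \preceq_{\mathcal{S}} \mat{P}'$, the two noise processes can be jointly realized as $\bm{w}' = \bm{w} + \bm{w}_{\Delta}$ and $\bm{\mu}'_k = \bm{\mu}_k + \bm{\mu}_{\Delta,k}$, where $\bm{w}_{\Delta}$ and $\bm{\mu}_{\Delta,k}$ are independent zero-mean Gaussians with PSD covariances $\mat{W}' - \mat{W}$ and $\mat{P}'_k - \mat{P}_k$ respectively. This reduces the monotonicity statement to showing that injecting additional independent, zero-mean Gaussian perturbations into both process and measurement channels cannot decrease the expected accumulated lateral deviation.

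I would then close the loop with the pure pursuit law $\delta = \arctan(2l\sin(\alpha)/L)$ on top of \cref{eq:veh_dyn}. Under the standard regularity assumption that the lookahead $L$ is large enough relative to the path curvature to keep the closed-loop map well-defined and locally Lipschitz around the reference trajectory, the first-order sensitivity of $e_p$ to small noise inputs is affine. Since the added Gaussian perturbations $\bm{w}_{\Delta}, \bm{\mu}_{\Delta,k}$ are symmetric around zero and $e_p \geq 0$ is convex-like in the deviations, the expected lateral error cannot decrease when additional independent noise is introduced (a stochastic-dominance / Jensen-style argument on the distance functional). Integration over time yields $\mathbb{E}[e_{p,\mathrm{tot}}(\mat{W}, \mat{P})] \leq \mathbb{E}[e_{p,\mathrm{tot}}(\mat{W}', \mat{P}')]$, the desired monotonicity.

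The main obstacle is the nonlinearity of the pure pursuit steering in $\alpha$ together with the discrete choice of the target point, which can make the closed-loop response non-monotonic under adversarial large perturbations (oscillations, or instability if $L$ is too small). Rigorously handling this requires leaning on the standard pure pursuit operating regime (lookahead large enough that the linearization about the reference path is valid) and on the symmetry of the Gaussian noise; the complete formalization, deferred to the extended version, would be carried out through a stochastic dominance argument for the accumulated squared lateral deviation driven by the coupled Gaussian perturbations.
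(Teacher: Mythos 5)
Your route is essentially the paper's: both reduce the claim to the closed-loop lateral error evolution under the pure pursuit law and then argue that enlarging $\mat{W}$ or $\mat{P}$ cannot shrink the expected accumulated error. The difference is in which half is made concrete. The paper's sketch derives the explicit error dynamics $\dot{e}_p=-\rearspeed\, e_\mathrm{along}(t)\sin(\steeringangle(t))/L+\rho_e(t)$, with $e_\mathrm{along}$ the along-track component of the vector to the target point, and then leaves the noise step as ``leverage properties of the system and measurement noises''; you leave the dynamics abstract (a Lipschitz/linearization assumption near the reference path) but make the noise step explicit via a Gaussian coupling ($\bm{w}'=\bm{w}+\bm{w}_{\Delta}$, $\bm{\mu}'_k=\bm{\mu}_k+\bm{\mu}_{\Delta,k}$ with PSD covariance increments) followed by a Jensen/stochastic-dominance argument. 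That coupling is exactly the \emph{substitution principle} the paper invokes in the sketch for \cref{lem:loosing_ekf}, so it is squarely within the authors' toolbox and is a legitimate way to discharge their unstated second step. To be on par with the paper you should still write down the pure pursuit error dynamics above: the sign structure of $-\rearspeed\, e_\mathrm{along}\sin(\steeringangle)/L$ is what justifies treating $e_p$ as convex-like in the injected perturbations, rather than assuming it. Your caveat about the nonlinearity of $\arctan(2l\sin(\alpha)/L)$ and the discrete target-point selection is honest; the paper's sketch glosses over the same issue and likewise defers it to the extended version.
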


\begin{lemma}
\label{lem:pp_effort}
The total pure pursuit control effort~$\steeringangle_{\mathrm{tot}}$ is monotonic in~$\mat{W}$ and in the sequence of estimate covariances~$\mat{P}$.
\end{lemma}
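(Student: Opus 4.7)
The plan is to mirror the argument behind \cref{lem:pp_tracking}, exploiting that the pure pursuit command $\delta_k=\arctan(2l\sin(\alpha_k)/L)$ is a strictly increasing function of $|\alpha_k|$ on the admissible steering range (since $\arctan$ is monotone and $|\sin(\cdot)|$ is nondecreasing on $[0,\pi/2]$, which contains $|\alpha_k|$ under the steering limits of \cref{eq:veh_dyn}). First I would write out the closed-loop recursion: at step $k$ the controller sees $\hat{\bm{z}}_k=\bm{z}_k+\bm{\mu}_k$ with $\bm{\mu}_k\sim\mathcal{N}(0,\mat{P}_k)$, picks the lookahead target $\tup{\posxt,\posyt}$ on the reference, forms $\alpha_k(\hat{\bm{z}}_k)$, and applies $\delta_k$, after which $\bm{z}_{k+1}$ is advanced by \cref{eq:veh_dyn} with system-noise increment of covariance $\mat{W}$. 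The total effort $\steeringangle_\mathrm{tot}=\sum_k |\delta_k|$ (understood in expectation, as elsewhere in \cref{sec:control_part}) is then an explicit functional of $\mat{W}$ and $\mat{P}$ through this recursion.

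Next I would set up a coupling. Given $\mat{W}\preceq_\mathcal{L}\mat{W}'$ and $\mat{P}\preceq_\mathcal{S}\mat{P}'$, Cholesky factorization lets me realise primed and unprimed noise sequences on a common probability space, driven by the same standard-normal innovations, so that the two trajectories of the state-estimation error satisfy the corresponding Loewner inequality pathwise. By \cref{lem:estmon}, the covariance dependence on $\mat{W}$ is itself monotone, so it suffices to prove monotonicity separately in $\mat{P}$ and then chain. Pushing the coupling through the pure pursuit geometry, the magnitude $|\alpha_k|$ is a nondecreasing function of the deviation of $\hat{\bm{z}}_k$ from the reference (the lookahead construction is a fixed-radius projection from $\hat{\bm{z}}_k$); composing with the monotone transform $\arctan(2l\sin(\cdot)/L)$ and taking expectations over the coupled noises preserves the inequality termwise in the sum that defines $\steeringangle_\mathrm{tot}$.

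The hard part will be the non-smoothness of the ``find path-point closest to the vehicle'' step: pointwise, a larger state-estimate deviation could shift the lookahead target across a curvature discontinuity of the reference and accidentally reduce $|\alpha_k|$. To handle this cleanly I would either restrict to references of bounded curvature (as is standard in~\cite{coulter1992implementation, rokonuzzaman2021review}), under which the projection is Lipschitz and the composition preserves the monotone covariance ordering in expectation, or smooth the lookahead map and invoke a continuity argument. The detailed verification, essentially parallel to that of \cref{lem:pp_tracking} and to the effort-side argument of \cref{lem:stanley_mon_effort}, is deferred to the extended version of this work.
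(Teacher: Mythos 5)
Your proposal follows essentially the same route as the paper: the authors likewise prove \cref{lem:pp_effort} by ``following the procedure in \cref{lem:pp_tracking}'' (i.e., deriving the pure pursuit lateral error dynamics and leveraging the system/measurement noise properties) and then examining the behavior of $\steeringangle(t)=\arctan(2l\sin(\alpha)/L)$, which is exactly the monotone-composition-plus-expectation argument you outline. Your additional coupling construction and the caveat about the closest-point projection are refinements consistent with, not departures from, the paper's (very brief) sketch.
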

\paragraph*{LQR with adaptive state space control}
The error is given by~$\bm{e}(t)=[ e_p,\heading_e]$, where~$e_p$ is the positional error perpendicular to the path tangent, and~$\heading_e$ is the difference between the path tangent and the vehicle orientation.
The method linearizes the error dynamics around~$[0,0]$ at every time instant, and solves an infinite horizon optimization problem for the linearized system.
The error dynamics for small errors can be formulated as
\begin{equation*}
\dot{\bm{e}}(t)=\begin{bmatrix}
0&\rearspeed\\
0&0
\end{bmatrix}\bm{e}(t)+\begin{bmatrix}
0\\ \rearspeed/l
\end{bmatrix}\steeringangle(t),
\end{equation*}
and the quadratic cost function to minimize takes the form
\begin{equation*}
    J(T)=\int_{0}^T\bm{e}(t)^\intercal \mat{Q}\bm{e}(t)+ \steeringangle^2(t)\mat{R}\text{d}t.
\end{equation*}

\begin{lemma}
\label{lem:lqr_error}
The total LQR control lateral tracking error~$e_{p,\mathrm{tot}}$ is monotonic in~$\mat{W}$ and in the sequence of estimate covariances~$\mat{P}$.
\end{lemma}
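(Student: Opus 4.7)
The plan is to mirror the strategy used for the Stanley and pure pursuit lemmas while exploiting the linear structure of LQR, which actually makes the monotonicity argument cleaner than in the geometric cases. First, I would invoke certainty equivalence: the LQR controller synthesised from the linearised error dynamics yields a constant gain $\mat{K}$ obtained from the algebraic Riccati equation associated with $\mat{A} = \begin{bmatrix}0 & \rearspeed \\ 0 & 0\end{bmatrix}$, $\mat{B} = [0,\,\rearspeed/l]^\intercal$ and the weights $\mat{Q},\mat{R}$, and the implemented law is $\steeringangle(t) = -\mat{K}\hat{\bm{e}}(t)$, with $\hat{\bm{e}}$ the error estimate produced by the EKF of \cref{sec:state_est}.

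Next, I would write the closed-loop linearised error as $\dot{\bm{e}}(t) = (\mat{A} - \mat{B}\mat{K})\bm{e}(t) + \mat{B}\mat{K}\bm{\mu}(t) + \bm{w}(t)$, where $\bm{\mu} = \bm{e} - \hat{\bm{e}}$ is the state-estimation error and $\bm{w}$ is the process noise with covariance $\mat{W}$. By \cref{lem:estmon}, the covariance of $\bm{\mu}$ is Loewner-upper-bounded by the EKF output sequence $\mat{P}$, which is itself monotone in $\mat{W}$ and $\mat{V}$.

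Letting $\bm{\Sigma}(t) = \mathbb{E}[\bm{e}(t)\bm{e}(t)^\intercal]$, linearity of the closed loop yields a Lyapunov-type equation for $\bm{\Sigma}$ that is affine in $\mat{W}$ and in the covariance of $\bm{\mu}$. By variation of constants, $\bm{\Sigma}(t)$ decomposes as a sum of integrals of the form $\int_0^t \bm{\Phi}(t,s)\,\mat{M}(s)\,\bm{\Phi}(t,s)^\intercal\,\mathrm{d}s$, with $\bm{\Phi}$ the transition matrix of $\mat{A}-\mat{B}\mat{K}$ and $\mat{M}(s)$ being either $\mat{W}$ or a positive congruence of the instantaneous estimation-error covariance. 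Each such integral is a positive linear (hence Loewner-monotone) map of its argument, so $\bm{\Sigma}$ is Loewner-monotone in $\mat{W}$ and in $\mat{P}$. Finally, writing $e_{p,\mathrm{tot}} = \int_0^T \bm{c}^\intercal\bm{\Sigma}(t)\bm{c}\,\mathrm{d}t$ with $\bm{c} = [1,\,0]^\intercal$ exhibits the total positional error as a positive linear functional of $\bm{\Sigma}(t)$, from which monotonicity in $\mat{W}$ and in $\mat{P}$ follows immediately.

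The main obstacle is conceptual rather than computational: the true dynamics in \cref{eq:veh_dyn} are nonlinear and the adaptive LQR is constructed by linearising at zero error at every step, so one must argue that the linearised closed-loop description is accurate enough for Loewner comparisons to transfer to the actual system. This is exactly the standing small-error regime under which the adaptive LQR is well-posed in the first place. Within this regime, the closed-loop state covariance inherits the positive-linear dependence on $\mat{W}$ and on the estimation-error covariances established above, and the monotonicity claim follows without further calculation.
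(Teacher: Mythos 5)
Your proof takes a genuinely different route from the paper's. The paper's sketch (shared with \cref{lem:lqr_effort}) derives the expected error dynamics for the \emph{original, nonlinear} system and argues monotonicity directly from properties of the noise perturbations---the same pattern used for Stanley and pure pursuit, where the nonlinear error dynamics \cref{eq:error_stanley} are written down and a comparison argument is made at the level of the expected error. You instead stay entirely inside the linearised closed loop: a certainty-equivalent gain $\mat{K}$, a Lyapunov equation for $\bm{\Sigma}(t)=\mathbb{E}[\bm{e}(t)\bm{e}(t)^\intercal]$, and the observation that $\bm{\Sigma}$ is a positive linear (hence Loewner-monotone) image of $\mat{W}$ and of the estimation-error covariance, with $e_{p,\mathrm{tot}}$ a positive functional of $\bm{\Sigma}$. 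What your route buys is precision: for the linear closed loop the argument is essentially airtight and even quantitative. What it costs is the transfer back to the nonlinear plant, which the paper's approach avoids by never linearising.

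Two concrete gaps remain. First, the transfer is real and you name it but do not close it: the lemma is a statement about the actual controller acting on \cref{eq:veh_dyn}, and increasing $\mat{W}$ or $\mat{P}$ is exactly the perturbation that can drive the system out of the small-error regime in which the linearisation is accurate, so ``the standing regime under which the adaptive LQR is well-posed'' cannot simply be assumed when the hypothesis of the lemma is that the noise grows. Second, your decomposition of $\bm{\Sigma}$ into separate integrals $\int_0^t \bm{\Phi}(t,s)\mat{M}(s)\bm{\Phi}(t,s)^\intercal\,\mathrm{d}s$ driven independently by $\mat{W}$ and by the covariance of $\bm{\mu}$ silently drops the cross-covariance $\mathbb{E}[\bm{e}\bm{\mu}^\intercal]$: both $\bm{e}$ and $\bm{\mu}$ are excited by the same process noise $\bm{w}$, so the Lyapunov equation for $\bm{\Sigma}$ contains terms $\mat{B}\mat{K}\,\mathbb{E}[\bm{\mu}\bm{e}^\intercal]+\bigl(\mat{B}\mat{K}\,\mathbb{E}[\bm{\mu}\bm{e}^\intercal]\bigr)^\intercal$ that are not sign-definite. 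The fix is standard---propagate the joint covariance of $(\bm{e},\bm{\mu})$, or work in the coordinates $(\hat{\bm{e}},\bm{\mu})$ where the innovation decouples the blocks---but as written the ``sum of positive congruences'' step is not justified.
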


\begin{lemma}
\label{lem:lqr_effort}
The total LQR control effort~$\steeringangle_\mathrm{tot}$ is monotonic in~$\mat{W}$ and in the sequence of estimate covariances~$\mat{P}$.
\end{lemma}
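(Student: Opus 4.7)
The plan is to reduce the claim to monotonicity of the closed-loop error covariance in the Loewner order, and then to monotonicity of a quadratic form in that covariance.

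First, I would fix the adaptive LQR feedback gain. Because the method linearizes the error dynamics at each instant and solves the associated infinite-horizon LQR with fixed weighting matrices $\mat{Q},\mat{R}$, the resulting state-feedback gain $\mat{K}$ depends only on the linearization point (essentially on $\rearspeed$) and the fixed weights, not on $\mat{W}$ or the estimate covariance $\mat{P}$. Hence along a fixed nominal trajectory the control law reads $\steeringangle(t)=-\mat{K}(t)\hat{\bm{e}}(t)$, with $\hat{\bm{e}}(t)=\bm{e}(t)+\bm{\mu}(t)$, $\bm{\mu}$ having covariance $\mat{P}(t)$ by the estimation step in \cref{sec:state_est}.

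Next, I would write the closed-loop error dynamics driven by both the plant noise $\bm{w}$ (with covariance $\mat{W}$) and the estimation noise $\bm{\mu}$ (with covariance $\mat{P}$). Denoting by $\mat{\Sigma}(t)=\mathbb{E}[\bm{e}(t)\bm{e}(t)^\intercal]$, $\mat{\Sigma}$ obeys a time-varying Lyapunov equation of the form
\begin{equation*}
\dot{\mat{\Sigma}}=\mat{A}_{\mathrm{cl}}\mat{\Sigma}+\mat{\Sigma}\mat{A}_{\mathrm{cl}}^\intercal+\mat{B}\mat{K}\mat{P}\mat{K}^\intercal\mat{B}^\intercal+\mat{W},
\end{equation*}
with $\mat{A}_{\mathrm{cl}}=\mat{A}-\mat{B}\mat{K}$. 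By the standard comparison principle for Lyapunov differential equations, $\mat{\Sigma}$ is monotone in its forcing term in the Loewner order, so $\tup{\mat{W},\mat{P}}\preceq_{\mathcal{L}\times\mathcal{S}}\tup{\mat{W}',\mat{P}'}$ implies $\mat{\Sigma}\preceq_{\mathcal{S}}\mat{\Sigma}'$ with initial condition preserved.

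I would then express the expected instantaneous squared control effort as
\begin{equation*}
\mathbb{E}[\steeringangle(t)^2]=\operatorname{tr}\bigl(\mat{K}(t)(\mat{\Sigma}(t)+\mat{P}(t))\mat{K}(t)^\intercal\bigr),
\end{equation*}
which is monotone nondecreasing in both $\mat{\Sigma}(t)$ and $\mat{P}(t)$ in the Loewner order since $\mat{K}\mat{X}\mat{K}^\intercal\preceq_\mathcal{L}\mat{K}\mat{Y}\mat{K}^\intercal$ whenever $\mat{X}\preceq_\mathcal{L}\mat{Y}$ and the trace preserves the Loewner order. Integrating (or summing) over the trajectory gives $\steeringangle_\mathrm{tot}$, and monotonicity is preserved under integration, yielding the claim.

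The main obstacle I anticipate is the dependence of $\mat{K}(t)$ on the linearization state (through $\rearspeed$): one must verify that comparing trajectories with different $\mat{W},\mat{P}$ along the same nominal reference keeps $\mat{K}(t)$ common across the comparison (so that the quadratic form comparison above is valid), rather than having the gain itself drift differently in the two cases. I would address this by comparing the two systems along the same nominal/reference trajectory (so $\mat{A}(t),\mat{B}(t),\mat{K}(t)$ agree) and noting that $\steeringangle_\mathrm{tot}$ is understood in expectation over the noise realizations about that common nominal; a detailed treatment of this coupling is deferred to the extended version.
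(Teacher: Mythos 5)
Your argument is correct and is a concrete elaboration of a route the paper only gestures at: the published proof sketch for \cref{lem:lqr_error} and \cref{lem:lqr_effort} is a single sentence (``derive the expected error dynamics for the original, nonlinear system, and leverage properties of the noise perturbations''), and notably claims to work with the \emph{nonlinear} error dynamics, whereas you work entirely in the linearized closed loop. Within that regime your chain is sound: the adaptive gain $\mat{K}(t)$ solves a Riccati equation depending only on the linearization point (through $\rearspeed$) and on the fixed $\mat{Q},\mat{R}$, hence is common to the two systems being compared; the error covariance obeys a Lyapunov differential equation whose solution is, by variation of constants, Loewner-monotone in the forcing term $\mat{B}\mat{K}\mat{P}\mat{K}^\intercal\mat{B}^\intercal+\mat{W}$; and the expected squared effort is an order-preserving quadratic form in $\mat{\Sigma}+\mat{P}$. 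What your approach buys is an explicit, checkable mechanism (the comparison principle for Lyapunov equations) in place of the paper's appeal to generic ``properties of the noise perturbations.'' Two caveats are worth stating explicitly. First, the identity $\mathbb{E}[\steeringangle^2]=\operatorname{tr}\bigl(\mat{K}(\mat{\Sigma}+\mat{P})\mat{K}^\intercal\bigr)$ uses that the instantaneous estimation noise $\bm{\mu}(t)$ is uncorrelated with $\bm{e}(t)$; this holds under the paper's additive-white-noise model of the estimate in \cref{sec:state_est}, but would fail for a genuine EKF error, so it should be flagged as a modeling assumption rather than a fact. Second, because you only analyze the linearized dynamics, your proof establishes the lemma in the small-error regime; to recover the paper's claim about the original nonlinear system one would need an extra step, e.g.\ the substitution/coupling argument invoked for \cref{lem:loosing_ekf} and \cref{lem:stanley_mon_error}, applied pathwise to the nonlinear closed loop. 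Your handling of the gain's dependence on the nominal trajectory (comparing both noise levels along the same reference so that $\mat{K}(t)$ agrees) correctly resolves the main coupling issue.
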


\paragraph*{Nonlinear Model Predictive Control (NMPC)}
The NMPC method with receding horizon strategy aims at minimizing the positional error~$\bm{e}_k$ (expressed with respect to the point on the path which is closest to the vehicle's center of mass at instant~$k$) and control effort~$\steeringangle$ over~$n_\mathrm{h}\in \mathbb{N}$ steps.
The formulation of the optimization problem is as follows:
\begin{equation*}
\begin{aligned}
u_k^\star &= \argmin_{U_k}\sum_{i=0}^{n_\mathrm{h}+1}e_{k+i}^\intercal \mat{Q}e_{k+i}+u_{k+i}^\intercal \mat{R}u_{k+i},\\
U_k&=\{u_k,\ldots,u_{k+n_\mathrm{h}-1}\},\\
e_k&=\hat{e}(t_k),\\
e_{k+i}&=\int_{k+i-1}^{k+i}\frontspeed \sin(\theta_e(\tau)-u(\tau))\text{d}\tau,
\end{aligned}    
\end{equation*}
where~$\dot{\heading}_e$ follows \cref{eq:error_stanley}, and where one only applies~$u_0^\star$ each time.
This technique is characterized by different path approximation techniques (e.g., linear, quadratic, and cubic), integration techniques, and by different lateral error reference points on the vehicle (e.g., rear or center of gravity).
\begin{lemma}
\label{lem:nmpc_mon_error}
The total NMPC lateral control tracking error is monotonic in~$\mat{W}$ and the sequence of estimate covariances~$\mat{P}$.
\end{lemma}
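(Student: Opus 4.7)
My plan is to adapt the strategy employed for Lemmas \ref{lem:stanley_mon_error}, \ref{lem:pp_tracking}, and \ref{lem:lqr_error} to the NMPC setting, in which the control law is implicitly defined as the solution of a parametric optimization problem rather than an explicit feedback. I would fix two nuisance specifications $\tup{\mat{W},\mat{P}}$ and $\tup{\mat{W}',\mat{P}'}$ with $\mat{W}\preceq_{\mathcal{L}}\mat{W}'$ and $\mat{P}\preceq_{\mathcal{S}}\mat{P}'$, and couple the two probability spaces so that the noise realizations under the larger covariances equal those under the smaller ones plus independent zero-mean perturbations. This reduces the claim to a sample-path comparison together with a convexity argument on the expected total error.

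The first step is to argue that at each time $t_k$ the NMPC command $u_0^\star$ is a well-defined, locally Lipschitz function of the state estimate $\hat{\bm{s}}_k$. Since the stage cost is strictly convex quadratic in $u$ and the prediction dynamics in \cref{eq:veh_dyn} are smooth, under standard constraint qualifications and second-order sufficient conditions the parametric NLP admits a Lipschitz minimizer. Consequently the realized closed-loop error trajectory is a measurable functional of the process-noise and the estimation-error realizations, and the total tracking error is an integrable, convex-in-perturbation functional of them.

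I would then split the effect of $\mat{W}$ and $\mat{P}$. Increasing $\mat{W}$ injects an extra zero-mean perturbation in the true state between control updates; because the error integrand is convex in the state and the NMPC feedback is certainty-equivalent to first order, Jensen's inequality applied to the coupled realizations yields monotonicity of the expected error in $\mat{W}$. For $\mat{P}$, writing $\hat{\bm{s}}_k=\bm{s}_k+\tilde{\bm{s}}_k$ with $\tilde{\bm{s}}_k$ zero-mean and variance-monotone by \cref{lem:estmon}, the same convexity argument applied to the composition of the Lipschitz policy with the true dynamics delivers monotonicity in the sequence $\mat{P}$.

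The principal obstacle is precisely the absence of a closed form for $u_0^\star$: unlike Stanley, pure pursuit, and LQR, the NMPC map is only implicitly defined, so the explicit sample-path inequalities used in the previous lemmas must be replaced by sensitivity results for parametric nonlinear programs, which in turn demand care with the active set and the horizon truncation. An alternative route, consistent with the remark earlier in the paper that feasibility relations may be populated by simulation, is to certify monotonicity empirically via Monte Carlo evaluation of the closed loop, with the coupling argument above guaranteeing that the observed trend is a structural property of the feedback rather than an artefact of sampling. Both routes will be developed in the extended version.
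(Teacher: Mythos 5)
Your route diverges from the paper's: the paper's sketch first reduces the NMPC closed loop to the same error dynamics as Stanley control (\cref{eq:error_stanley}) and then establishes a \emph{structural} property of the implicit feedback map --- namely that, for positive definite~$\mat{Q}$, the optimal initial input~$u_0^\star$ is an s-shaped (monotone, odd, saturating) function of the initial lateral error, translated along the input axis by the heading error. The monotonicity of the expected total error in~$\mat{W}$ and~$\mat{P}$ is then read off from this shape together with the noise properties, exactly as in the Stanley argument. You instead propose a coupling of the two probability spaces followed by a Jensen-type argument. The coupling itself is sound and mirrors the substitution principle the paper uses for \cref{lem:loosing_ekf}.

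The genuine gap is your convexity step. You assert that ``the total tracking error is an integrable, convex-in-perturbation functional'' of the noise and estimation-error realizations, and Jensen's inequality is then invoked on that basis. But convexity of the stage cost in the state does not transfer through the composition with the nonlinear plant dynamics and the implicitly defined, merely Lipschitz NMPC policy: the map from a noise realization to the closed-loop error trajectory is nonlinear, and a convex function composed with a nonconvex map is not convex. Without that convexity, adding an independent zero-mean perturbation need not increase the expected total error for a nonlinear closed loop, so the Jensen step does not go through. This is precisely the work that the paper's s-shape characterization of~$u_0^\star$ performs: it supplies the qualitative property of the feedback (monotone corrective action, proportional translation under heading error) that lets one compare the perturbed and unperturbed error trajectories directly, without needing global convexity. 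Your fallback of certifying monotonicity by Monte Carlo is consistent with the paper's remark on populating \glspl{abk:mdpi} by simulation, but it is a modeling practice, not a proof of the lemma. To close the argument analytically you would need to establish the shape property of the parametric minimizer (or an equivalent comparison principle for the closed-loop error dynamics) rather than rely on convexity of the composite functional.
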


\begin{lemma}
\label{lem:nmpc_mon_effort}
The total NMPC lateral control effort is monotonic in~$\mat{W}$ and the sequence of estimate covariances~$\mat{P}$.
\end{lemma}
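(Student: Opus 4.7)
The plan is to reduce this claim to the tracking-error monotonicity already established in Lemma~\ref{lem:nmpc_mon_error} by exploiting the structure of the NMPC quadratic-cost subproblem. Fix two noise configurations with $\mat{W}\preceq_\mathcal{L} \mat{W}'$ and $\mat{P}\preceq_\mathcal{S}\mat{P}'$, and couple the underlying Brownian drivers $\bm{w},\bm{v}$ sample-pathwise, exactly as in the proof of Lemma~\ref{lem:nmpc_mon_error}. This produces estimated-error sequences $\{\hat{e}_k\}$ and $\{\hat{e}_k'\}$ with $|\hat{e}_k|\leq |\hat{e}_k'|$ almost surely. At each stage the applied input is $u_0^\star(\hat{e}_k)$, the first component of the minimizer of the receding-horizon quadratic program in the problem statement.

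First I would establish a stage-wise monotonicity: for weights $\mat{Q},\mat{R}\succ 0$, the magnitude of the optimal first-stage input $|u_0^\star(e)|$ is non-decreasing in $|e|$. In the unconstrained regime the linearization of the error dynamics together with the quadratic cost yields a linear feedback $u_0^\star=\mathcal{K} e$, whose norm scales monotonically with $|e|$; this is the same mechanism used in Lemma~\ref{lem:lqr_effort}. When the actuator limits $\steeringangle\in[\steeringangle_\mathrm{min},\steeringangle_\mathrm{max}]$ and $\steerspeed\in[\dot{\steeringangle}_\mathrm{min},\dot{\steeringangle}_\mathrm{max}]$ are active, the minimizer is the projection of the unconstrained solution onto the admissible set, and projection onto a convex feasible region preserves the order induced by~$|e|$. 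Combining this with the pointwise dominance $|\hat{e}_k|\leq|\hat{e}_k'|$ from the coupling gives $|u_0^\star(\hat{e}_k)|\leq|u_0^\star(\hat{e}_k')|$ at every stage, and summing along the path yields $\steeringangle_\mathrm{tot}(\mat{W},\mat{P})\preceq \steeringangle_\mathrm{tot}(\mat{W}',\mat{P}')$ in expectation, which is what we claim.

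The main obstacle will be the nonlinearity of the single-track dynamics and of the integrated error term appearing in the stage cost: the clean order-preserving behaviour of the linear-quadratic minimizer does not automatically lift to the nonconvex NMPC problem. I would address this by working in the small-error regime in which the cost is strongly convex in $U_k$ and the error dynamics are well approximated by the linearization used for the LQR analysis in Section~\ref{sec:control_part}, and then appealing to a standard perturbation bound for parametric convex programs to conclude that the optimizer depends monotonically on the parameter $\hat{e}_k$. Projection onto the admissible input set retains this monotonicity, so the stage-wise effort dominance transfers to the full NMPC solution and, after summation along the horizon, to $\steeringangle_\mathrm{tot}$.
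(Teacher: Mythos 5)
Your overall strategy---reduce effort monotonicity to a stage-wise claim that the magnitude of the optimal first NMPC input is non-decreasing in the magnitude of the estimated error---is the same structural idea the paper relies on: its argument establishes that the map from the initial lateral error to the optimal initial input is s-shaped (monotone and saturating) for positive definite~$\mat{Q}$, and that heading error translates this curve along the input axis. However, both your route to that stage-wise property and your way of lifting it to the noise ordering have concrete gaps. First, the coupling step overclaims: \cref{lem:nmpc_mon_error} asserts monotonicity of the \emph{expected} total error in~$\mat{W}$ and~$\mat{P}$, not an almost-sure pathwise domination $|\hat{e}_k|\leq|\hat{e}_k'|$. Under the natural coupling $\bm{w}'=\bm{w}+\tilde{\bm{w}}$ with $\tilde{\bm{w}}$ independent and of covariance $\mat{W}'-\mat{W}$, the extra noise can cancel the existing error on individual sample paths, so the pointwise inequality you invoke is false in general; what one actually obtains is a dilation (convex) ordering of the error distributions, and pushing that through a monotone-but-saturating, hence non-convex, input map requires an additional argument---which is precisely what the global shape result on $e\mapsto u_0^\star(e)$ is for.

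Second, your justification of the stage-wise monotonicity is weakest where it matters most. The constrained minimizer of a multi-stage quadratic program is the projection of the unconstrained minimizer only in the norm induced by the Hessian, not the Euclidean projection onto the box of admissible inputs, so ``projection onto a convex feasible region preserves the order induced by $|e|$'' does not follow as stated once the horizon couples the stages. Moreover, retreating to the small-error regime where the linearization and strong convexity hold proves a strictly weaker statement than the lemma, which concerns the full nonlinear receding-horizon problem with the $\sin$ nonlinearity in the integrated error term. The paper avoids both issues by proving a global property of the nonlinear program (the s-shape of the error-to-input map and its translation under heading error) from which the effort monotonicity follows without linearization or a Euclidean-projection step. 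To repair your proof you would need either to establish that global shape property directly or to replace the pathwise coupling with a distributional argument compatible with a saturating input map.
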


\subsubsection{Speed control}
The control goal is to track a certain target velocity~$v_\mathrm{t}$.
From \cref{eq:veh_dyn}, the velocity dynamics are~$\dot{v}_\mathrm{r}=a_\mathrm{r}+\bm{w}_{\rearspeed}$.
The system receives an estimation of the current velocity through the measurement model at each time instant~$k$:~$\hat{v}_{\mathrm{r},k}=v_{\mathrm{r},k}+
\rho_k$, where~$\rho_k$ is a standard Brownian process with effective noise covariance~$q_k$ (as per state estimation procedure and measurement model).
The control input is typically formulated via a PID control scheme (i.e., just by choosing specific tuning parameters~$k_\mathrm{p},k_\mathrm{i},k_\mathrm{d}$):
\begin{equation*}
    u(t)=k_\mathrm{p}(v_\mathrm{t}-\hat{v}(t))+k_\mathrm{i}\int_{0}^{t}(v_\mathrm{t}-\hat{v}(\tau))\text{d}\tau -k_\mathrm{d}\frac{\partial \hat{v}}{\partial t}.
\end{equation*}

\begin{lemma}
\label{lem:pid_tracking}
The total PID control tracking error is monotonic in~$\mat{W}$ and the sequence of estimate covariances~$q$.
\end{lemma}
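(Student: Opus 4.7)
The plan is to reduce the claim to a quadratic-form argument on a linear stochastic system, mirroring the strategy used for the preceding lateral-control lemmas. The velocity plant $\dot v_r = a_r + w_{v_r}$ is scalar linear, the PID law is a linear operator on $\hat v$, and the measurement model $\hat v_k = v_{r,k} + \rho_k$ is linear. Hence the closed-loop dynamics for the tracking error $e(t) = v_t - v_r(t)$ are linear in the two noise sources $w_{v_r}$ (with covariance $\mat{W}$ in the relevant scalar entry) and $\rho_k$ (with covariance $q_k$ provided by the estimator).

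First I would write the closed loop explicitly. Substituting $u = k_p(v_t - \hat v) + k_i \int_0^t (v_t - \hat v) d\tau - k_d \partial_t \hat v$ and using $\hat v = v_r + \rho$ (treated as a piecewise-constant process between estimator updates), the error $e$ satisfies a linear SDE of the form $\dot e = L[e] + g(w_{v_r}, \rho)$, where $L$ is a deterministic linear operator depending only on the PID gains and $g$ is a linear map of the noise processes with deterministic coefficients. By linearity, $e(t)$ decomposes as $e(t) = e^{\mathrm{det}}(t) + e^{\mathrm{stoch}}(t)$, where the stochastic part is a linear functional of $w_{v_r}$ and $\{\rho_k\}$.

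Second, I would take as ``total tracking error'' the expected quadratic functional $\mathbb{E}\bigl[\int_0^T e(t)^2 \dt\bigr]$ (the same cost structure used in the lateral lemmas), so that cross terms between $e^{\mathrm{det}}$ and $e^{\mathrm{stoch}}$ vanish in expectation, and
\begin{equation*}
\mathbb{E}\!\left[\int_0^T e(t)^2 \dt\right] = \int_0^T e^{\mathrm{det}}(t)^2 \dt + \underbrace{\alpha\,\mat{W} + \sum_k \beta_k\, q_k}_{\text{variance contribution}},
\end{equation*}
where $\alpha, \beta_k \geq 0$ are deterministic coefficients obtained from the impulse responses of $L$ applied to $w_{v_r}$ and $\rho_k$ respectively (non-negativity follows because each $\beta_k$ is itself an integrated square of a deterministic kernel, and likewise for $\alpha$).

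Third, monotonicity is then immediate: if $\mat{W} \preceq_{\mathcal{L}} \mat{W}'$ the scalar entry grows, and if $q \preceq_{\mathcal{S}} q'$ each $q_k$ grows, so the variance contribution is non-decreasing. Composing with \cref{lem:estmon,lem:loosing_ekf} to express $q$ in terms of its upstream covariances closes the argument and matches the monotone interface required by \cref{fig:codesigntheorems}. The main obstacle I expect is purely bookkeeping rather than conceptual: reconciling the continuous-time PID law with the discrete-time estimator output (so that $\hat v$ is well-defined as an input to $k_d \partial_t \hat v$), typically handled by a zero-order hold and an interpolated covariance $q(t)$, and then verifying that the resulting kernel coefficients $\alpha, \beta_k$ indeed come out non-negative for any stabilizing choice of gains $(k_p, k_i, k_d)$.
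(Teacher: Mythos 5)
Your argument is correct and follows essentially the same route as the paper's own sketch: both write the expected tracking error of the linear closed loop explicitly and observe that $\mat{W}$ and the $q_k$ enter only through non-negative coefficients, the paper doing this recursively ``between any two control steps'' while you do it globally via superposition of impulse responses over $[0,T]$. The only caveat is that you commit to a quadratic notion of total tracking error so that cross terms vanish, which the paper leaves implicit, but the monotonicity conclusion is unaffected.
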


\begin{lemma}
\label{lem:pid_effort}
The total PID control effort is monotonic in~$\mat{W}$ and the sequence of estimate covariances~$q$.
\end{lemma}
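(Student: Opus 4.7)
My plan is to reduce this statement to the previous estimation monotonicity results together with a linearity argument for the PID map. First I would fix an arbitrary target profile $v_\mathrm{t}$ and a realization of the process and measurement noises, and decompose the estimated speed as $\hat{v}(t) = v(t) + \tilde{v}(t)$, where $v(t)$ is the true velocity driven by the system noise $\bm{w}_{\rearspeed}$ and $\tilde{v}(t)$ is the estimation error produced by the Kalman filter on the scalar speed dynamics. Since the PID map $u[\cdot]$ is linear in its input, the control signal splits as
\begin{equation*}
u(t) \;=\; u_{\mathrm{nom}}(t) \;+\; u_{\mathrm{err}}(t),
\end{equation*}
with $u_{\mathrm{nom}}$ depending only on $v$ and $u_{\mathrm{err}}$ being a linear functional of $\tilde{v}$. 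Defining the total PID effort as the expected squared $L^2$-norm $E\!\left[\int_0^T u(t)^2\,\mathrm{d}t\right]$ (the natural counterpart of the effort used in \cref{lem:lqr_effort,lem:nmpc_mon_effort}), the cross term in the expansion vanishes whenever the measurement and process noise are independent of the nominal trajectory, leaving a sum of two nonnegative contributions.

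Second, I would invoke the scalar analogue of \cref{lem:estmon} to obtain that the sequence of estimator variances $q$ produced by the Kalman filter for $\dot{v}_\mathrm{r}=a_\mathrm{r}+\bm{w}_{\rearspeed}$ is monotone in $\mat{W}$ (through the entry corresponding to $\bm{w}_{\rearspeed}$) and in the measurement noise variance. Hence I only need to show monotonicity of the effort in the estimator variance sequence: the composition with the Kalman filter then yields the claimed monotonicity in $\mat{W}$ and in $q$, exactly as done in the previous lemmas of this section.

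Third, I would bound $E[u_{\mathrm{err}}(t)^2]$ explicitly. Writing the PID response as a linear operator $\mathcal{L}$ acting on $\tilde{v}$, one has $u_{\mathrm{err}}(t) = -k_\mathrm{p}\tilde{v}(t) - k_\mathrm{i}\int_0^t \tilde{v}(\tau)\,\mathrm{d}\tau - k_\mathrm{d}\dot{\tilde{v}}(t)$, so $E[u_{\mathrm{err}}(t)^2]$ is a nonnegative quadratic form in the second-order statistics of $\tilde{v}$. The entries of this form (variances and covariances of $\tilde{v}(t)$ at different times) are, by standard Kalman filter identities, monotone nondecreasing in the pointwise Loewner order of the covariance sequence $q$. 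Integrating over $[0,T]$ and taking expectation then gives
\begin{equation*}
q \preceq_\mathcal{S} q' \;\Longrightarrow\; E\!\left[\int_0^T u_{\mathrm{err}}(t)^2\,\mathrm{d}t\right] \;\le\; E\!\left[\int_0^T u_{\mathrm{err}}'(t)^2\,\mathrm{d}t\right],
\end{equation*}
and combining with the corresponding monotonicity of $u_{\mathrm{nom}}$ in $\mat{W}$ (which holds because larger system noise produces larger expected deviations from the target and therefore larger tracking action) closes the argument.

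The main obstacle I anticipate is the derivative term $k_\mathrm{d}\,\partial\hat{v}/\partial t$: since $\tilde{v}$ is driven by white measurement noise, its derivative is distributional, so the $E[u_{\mathrm{err}}^2]$ bound requires replacing the ideal differentiator by the causal discrete-time implementation actually used (or by a filtered derivative), and verifying that the resulting linear operator preserves the Loewner ordering of second-order statistics. Once this is handled by working in the discrete-time setting consistent with \cref{eq:meas_model}, the remaining steps are routine quadratic-form manipulations.
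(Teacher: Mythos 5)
Your route is genuinely different from the paper's. The paper's argument for this lemma is a direct first-moment computation: it defines the effort through the expected value of the commanded acceleration between control steps and checks, exactly as in \cref{lem:pid_tracking}, that this expression cannot decrease when $\mat{W}$ or $q$ grows. You instead work with a second-moment (expected squared $L^2$) notion of effort and argue by superposition: linearity of the plant and of the PID law splits the closed-loop control into a component driven by the process noise and one driven by the estimation noise, the cross term is killed by independence, and each piece is a quadratic form that is monotone in the corresponding covariance. Your version is more systematic and would transfer to the LQR/NMPC effort lemmas; the paper's version is shorter and stays closer to the tracking-error proof it piggybacks on. Both are legitimate readings of the deliberately underspecified phrase ``total control effort.''

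Two points in your argument need tightening. First, $u_{\mathrm{nom}}$ cannot ``depend only on $v$'' in the naive sense, because $v$ itself is driven by the total control $u$ and hence by $\tilde v$; the decomposition is valid, but only as a superposition of the \emph{closed-loop} linear system's responses to the two exogenous inputs $\bm{w}_{\rearspeed}$ and $\rho$, and you should state it that way. Second, your quadratic form involves cross-time covariances such as $E[\tilde v(t)\int_0^t\tilde v(\tau)\,\mathrm{d}\tau]$, while the hypothesis $q\preceq_\mathcal{S} q'$ only orders the marginal variances. For a genuine Kalman-filter error process, which is temporally correlated, the marginal ordering does not control the cross-time correlations, so the displayed inequality does not follow ``by standard Kalman filter identities.'' It does follow under the paper's actual measurement model, in which $\rho_k$ is exogenous noise independent across time with covariance $q_k$, so that the covariance operator of $\tilde v$ is diagonal and is ordered whenever $q$ is; you should invoke that modeling assumption explicitly rather than properties of the EKF error. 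With those two repairs, and your own fix for the derivative term, the argument goes through.
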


\subsubsection{Brake control}
The topic of (emergency) braking has been treated in detail in~\cite{zardiniIros21}, where longitudinal sensors (ordered by their performance, expressed via false positives, false negatives, and accuracy curves) were used to detect potential obstacles. 
Clearly, the more uncertain the obstacle detection, the more potentially dangerous will the braking maneuver be.
We delay the treatment of this particular topic to future works, and refer the interested reader to~\cite{zardiniIros21}.
\subsection{Control as \gls{abk:mdpi}}
We can now combine the results presented for estimation (\cref{sec:state_est}) and for control (\cref{sec:control_part}) to formulate a co-design theorem.%
\begin{theorem}
The presented controllers (PID, Stanley, Pure Pursuit, LQR, and NMPC) can be written as \glspl{abk:mdpi} of the form in \cref{fig:codesigntheorems}.
\end{theorem}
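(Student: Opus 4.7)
The plan is to check that each controller fits the MDPI abstraction of \cref{fig:codesigntheorems} by (i) specifying the functionality and resource posets, (ii) specifying the implementation set together with the $\prov$/$\req$ maps, and (iii) verifying monotonicity of $\req$ via composition of the lemmas already established in \cref{sec:state_est} and \cref{sec:control_part}.

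For each controller $X\in\{\text{PID, Stanley, Pure Pursuit, LQR, NMPC}\}$, I would take as functionality poset $\setOfFunctionalities{X}=\mathcal{M}^n\times\mathcal{M}^n\times[0,1]$, ordered componentwise (Loewner on covariances, standard on $[0,1]$), with the three coordinates representing the observation noise $\mat{V}$, system noise $\mat{W}$, and dropping probability $p$ that the controller tolerates. The resource poset is $\setOfResources{X}=\mathbb{R}_{\geq 0}\times\mathbb{R}_{\geq 0}$ with componentwise order (total tracking error $e_{p,\mathrm{tot}}$ and total control effort $\steeringangle_\mathrm{tot}$). The implementation set $\setOfImplementations{X}$ collects the admissible tuning parameters of the controller (Stanley gain $g$; pure-pursuit lookahead $L$; LQR weights $\mat{Q},\mat{R}$; NMPC horizon $n_\mathrm{h}$ and weights; PID gains $k_\mathrm{p},k_\mathrm{i},k_\mathrm{d}$). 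The map $\prov$ returns the triple $\langle\mat{V},\mat{W},p\rangle$ under which the controller is evaluated, while $\req$ returns the pair $\langle e_{p,\mathrm{tot}},\steeringangle_\mathrm{tot}\rangle$ obtained by closing the loop of the controller with the EKF of \cref{sec:state_est} and the vehicle model \cref{eq:veh_dyn,eq:meas_model}.

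The key step is the monotonicity of $\req$, which is obtained by composition of two already-proved monotone maps. First, \cref{lem:estmon,lem:loosing_ekf} together yield a monotone map $\langle\mat{V},\mat{W},p\rangle\mapsto\mat{P}$ from the functionality poset into the sequence poset $(\mathcal{S},\preceq_\mathcal{S})$ of Hermitian-matrix covariance sequences. Second, for each controller the corresponding pair of lemmas (\cref{lem:stanley_mon_error,lem:stanley_mon_effort} for Stanley; \cref{lem:pp_tracking,lem:pp_effort} for pure pursuit; \cref{lem:lqr_error,lem:lqr_effort} for LQR; \cref{lem:nmpc_mon_error,lem:nmpc_mon_effort} for NMPC; \cref{lem:pid_tracking,lem:pid_effort} for PID) provides a monotone map $\langle\mat{W},\mat{P}\rangle\mapsto\langle e_{p,\mathrm{tot}},\steeringangle_\mathrm{tot}\rangle$. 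The composition of monotone maps being monotone, $\req$ is monotone from $\setOfFunctionalities{X}$ to $\setOfResources{X}$, which is precisely the defining property of the MDPI associated to $X$ and characterizes its feasibility relation $\bar{d}_X$.

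The main obstacle is careful bookkeeping of the orders: one must verify that the Loewner order $\preceq_\mathcal{L}$ on covariances, its componentwise extension $\preceq_\mathcal{S}$ to sequences, and the standard orders on $[0,1]$ and on $\mathbb{R}_{\geq 0}$ compose consistently with the $(\cdot)\op$ convention used in the definition of $\bar{d}$ in \cref{sec:co-design}. Intuitively, ``tolerating more noise'' is a more demanding functionality and hence costs more resources, and this intuition must be aligned with the formal direction of the order on each coordinate. Once this alignment is verified for one controller, the argument is identical across the five cases: the only controller-specific content lies in the cited lemmas, which have already been reduced to the common interface of \cref{fig:codesigntheorems}.
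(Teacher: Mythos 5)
Your proposal is correct and follows essentially the same route as the paper, whose own proof simply states that the theorem follows from the preceding lemmas: you compose the EKF monotonicity results (\cref{lem:estmon,lem:loosing_ekf}) with the controller-specific error/effort lemmas to obtain monotonicity of the feasibility relation in $\langle\mat{V},\mat{W},p\rangle$, which is exactly the intended argument. Your write-up is in fact more explicit than the paper's (identifying the posets, implementation sets, and the order-direction bookkeeping); the only cosmetic slip is calling the composed functionality-to-resource map ``$\req$'', which in the MDPI definition is a map on implementations --- what you are really establishing is monotonicity of $\bar{d}_X$, as you note at the end.
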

\begin{proof}
The proof of the theorem follows easily from the proofs of the previous lemmas.
\end{proof}

\paragraph*{Discussion}
The presented theorem allows one to frame standard vehicle control systems as \glspl{abk:mdpi}.
This result provides an interface to smoothly include control synthesis in the robot co-design problem.
We remark that theory and results can be generalized to the case of uncertain parameters~\cite{censi2017uncertainty}.

\section{Co-design of an autonomous vehicle}
We now show the ability of the proposed framework to embed aspects of task-driven vehicle control synthesis into the co-design of an entire platform.
The guiding example of this work is the one of a \gls{abk:av}, but the same principle can be generalized to other autonomous systems and abstraction levels~\cite{zardiniAnnuRev22, zardini2022co}.
We first present the setting of the case study, then detail the modeling of the \gls{abk:av} as an interconnection of \glspl{abk:mdpi}, and finally describe selected results to showcase the potential of the approach.

\subsection{Setting}
We consider urban scenarios, extending and customizing the ones proposed in the CommonRoads framework~\cite{althoff2017commonroad}.
We implemented the mentioned autonomy pipelines in our own simulator.
While the proposed approach has been tested on several scenarios (e.g., racing, pursue-evasion, exploration), for exposition purposes we focus on two examples.
We first look at the case in which an \gls{abk:av} needs to perform a \ang{90} degrees, and then look at a lane change example (\cref{fig:90_effort_error}, \cref{fig:lane_cost_error}).
The task of the \gls{abk:av} consists in following a trajectory (with customized curvature severity) at a desired speed.
By fixing a particular task we want to find the autonomy pipeline for the \gls{abk:av} to minimize selected resource usages.

\subsection{Co-design model}
We now consider the co-design diagram for an \gls{abk:av} presented in \cref{fig:maincodesigndiag}, and describe the principles behind each block.
While we explicitly showed how to think about control schemes as \glspl{abk:mdpi}, this can be done for the other blocks as well, and all the conditions can be checked empirically or analytically.
The co-design diagram can be thematically subdivided into control, perception, implementation, and evaluation blocks.
\subsubsection{Control}
Hereafter, we describe the longitudinal \circled{1} and lateral \circled{2} design blocks.
Both provide the fulfillment of a \F{task}, in a specific \F{environment} (e.g., characterized by the time of the day, or the density of obstacles on the road~\cite{zardiniIros21}), and with robustness to a particular \F{uncertainty} in the \gls{abk:av} model.
The vehicle is characterized by a \R{dynamic performance} (e.g., parametrized by the reachable speed, acceleration, and steering angle) \circled{3}.
For the control laws to be implemented, \R{observations} (with particular \R{uncertainties}) from sensors are required (for the longitudinal case, to estimate speed and presence of obstacles, and for the lateral case, to estimate position and heading), which are received at particular \R{frequencies}.
Control techniques will need to be implemented at certain \R{frequencies} and will cause specific \R{control efforts} and \R{errors} (in the longitudinal control case, expressed in terms of \R{velocity deviation}, and in the lateral control case, as \R{lateral deviation}), as well as \R{discomfort} (e.g., intensity of the steering, or gravity of accelerations) and \R{dangerous situations} (e.g., increasing the probability to hit obstacles).
For these blocks, models can be obtained analytically and via simulations.

\subsubsection{Perception}
\F{Observations} are provided to controllers via sensors at particular \F{frequencies}, requiring monetary \R{costs}, \R{power consumptions}, and \R{mass} to be transported (\circled{4},\circled{5},\circled{6}).
For these blocks, models are typically obtained from sensor catalogues, photogrammetry, and simulations.
Particular observation schemes can also be artificially perturbed, and observation dropping schemes can be applied.
For details see~\cite{zardiniIros21}.

\subsubsection{Implementation}
Control routines need to be implemented (\circled{7}, \circled{8}) at certain \F{frequencies} requiring \R{computation}, provided by physical computers \circled{9}.
Computers come at a monetary \R{cost}, \R{power consumption}, and \R{mass} to be transported.
The computation required for particular processes can be estimated via benchmarking (indeed, different implementations of the same concept will require different computational performance).
Models for computers can be derived from catalogues.
\subsubsection{Evaluation}
Different choices for the above thematic blocks will give rise to different outcomes, each characterized by various performance metrics, such as total monetary \R{cost} of the platform, control \R{effort}, control \R{error}, \R{danger}, and \R{discomfort} (\circled{10}, \circled{11},\circled{12}).
Furthermore, the \R{mass} and \R{power} required by the components will be provided by the phyiscal vehicle.
Here, models can be customized to fit specific applications.

\subsection{Co-design results}

\setlength{\tabcolsep}{5pt}
\begin{table*}[t]
\begin{footnotesize}
\begin{center}
\begin{tabular}{lll}
\textbf{Variable} & \textbf{Options}\\
\toprule
PP&$L\in \{0.01,0.05,0.5,1.0,2.0\}$\\
Stanley&$g\in \{0.05,0.1,0.5,1.0,1.5,2.0\}$\\
LQR&$\mat{R}\in \{0.001,0.05,0.5,1.0,10.0\}$, $\mat{Q}\in \{0.1\cdot \mathbb{1},1.0\cdot \mathbb{1},10.0\cdot \mathbb{1},0.2\cdot (\mathbb{1}-0.9e_3),1.0\cdot (\mathbb{1}-0.9e_3),5.0\cdot (\mathbb{1}-0.9e_3)\}$ \\
NMPC&$n_\mathrm{h}\in \{10.0,15.0,20.0,25.0\}$, $\mat{R}\in \{0.05,0.5,1.0,5.0\}$, $\mat{Q}\in \{0.01\cdot \mathbb{1},0.1\cdot \mathbb{1},1.0\cdot \mathbb{1},10.0\cdot \mathbb{1}\}$\\
PID&$k_\mathrm{p}\in \{0.1,0.5,1.0,2.0\}$,~$k_\mathrm{I}\in \{0.01,0.1,0.5,1.0\}$,~$k_\mathrm{d}\in \{0.01,0.05,0.1,1,0\}$\\
Computers&RPi 4B, Jetson Nano/TX1,2/AGX Xavier, Xavier NX\\
Sensors &Basler Ace251gm/222gm/13gm/7gm/5gm/15um, Flir Pointgrey, KistlerSMotion, OS032/128, OS232/128, HDL 32/64\\
\bottomrule
\end{tabular}
\end{center}
\caption{Variables and options for the \gls{abk:av} co-design problem.}
\label{tab:designvars}
\end{footnotesize}
\end{table*}

We now solve the co-design problem presented in the previous section focusing on a selection of queries.\footnote{The solution techniques for this kind of optimization problems and their complexity are described in~\cite[Proposition 5]{Censi2015} and in our talk at \url{https://bit.ly/3ellO6f}.
Once the co-design diagram is created, one can directly solve it via a solver based on formal language.
We are writing a book on the subject, and teaching classes; see \url{https://applied-compositional-thinking.engineering}.}
By fixing a task (i.e., a desired \F{scenario}, \F{speed}, and average \F{curvature}), we can characterize optimal design solutions in terms of monetary \R{cost}, control \R{effort}, control \R{error}, \R{danger}, and \R{discomfort}.
The design space is characterized by the controllers presented in \cref{sec:control} and their parameters, various sensors for the different perception blocks, and computer models, all listed for convenience in \cref{tab:designvars}.
For simplicity of exposition, we do not consider obstacle detection modeling (already treated in depth in~\cite{zardiniIros21}), and focus on path tracking and speed control.
Note that this represents just a sample of the designs we can look at. (For instance, we neglect control frequencies).

\begin{figure}[t]
    \centering
\begin{subfigure}[b]{\linewidth}
\begin{tikzpicture}
    \node at (0,0){
    \includegraphics[width=\linewidth]{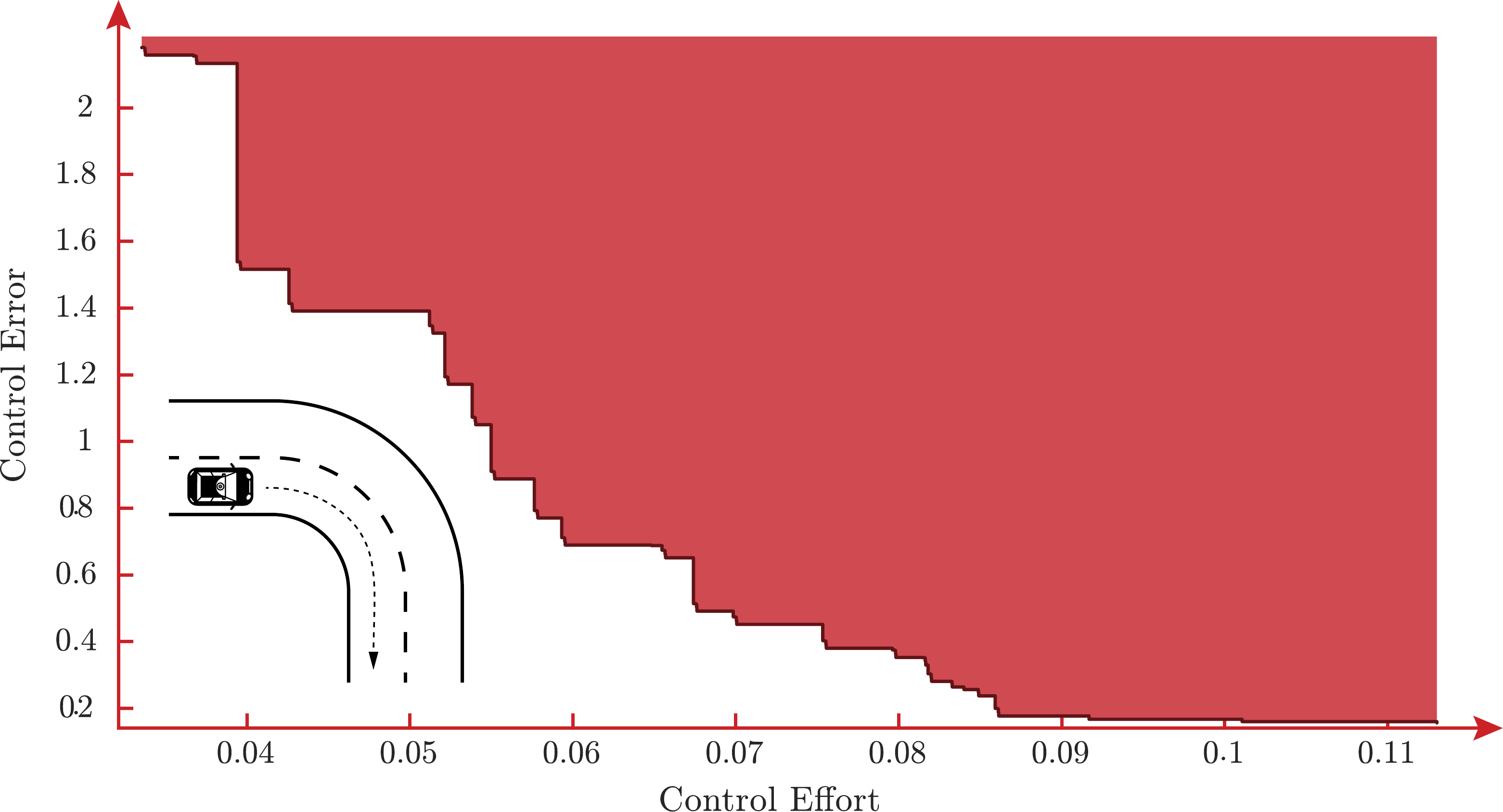}};
    \node[blockfill, inner sep=-0.5pt, font=\scriptsize] at (0.3,1.2) {\begin{tabular}{l}
    NMPC linear, no rear\\
    $\mat{Q}=0.01\cdot \mathbb{1}$\\
    $\mat{R}=5.0$,~$n_\mathrm{h}=20.0$\\
    $k_\mathrm{p}=0.1$,~$k_\mathrm{I}=0.01$,~$k_\mathrm{d}=0.05$\\
    OS2128, Ace13gm\\
    Nvidia Xavier
    \end{tabular}};
    \node[blockfill, inner sep=-0.5pt, font=\scriptsize] at (2.1,-0.5) {\begin{tabular}{l}
    NMPC cubic, no rear\\
    $\mat{Q}=0.1\cdot \mathbb{1}$\\
    $\mat{R}=1.0$\\
    $k_\mathrm{p}=k_\mathrm{I}=0.1$,~$k_\mathrm{d}=0.01$\\
    OS0128, Ace22gm\\
    Jetson TX1
    \end{tabular}};
    \draw[-Triangle,thick] (-3.2,1.95) to[bend right=30] (-1.82,1.5);
    \draw[-Triangle,thick] (1.05,-1.6) to[bend right=30] (2,-1.4);
\end{tikzpicture}
    \subcaption[b]{Trade-off (antichain) of total control error and effort for a \ang{90} turn, with low curvature at \unit[8]{m/s}, with corresponding design choices.}
    \label{fig:90_effort_error}
\end{subfigure}
\begin{subfigure}[b]{\linewidth}
\begin{tikzpicture}
    \node at (0,0){
\includegraphics[width=\linewidth]{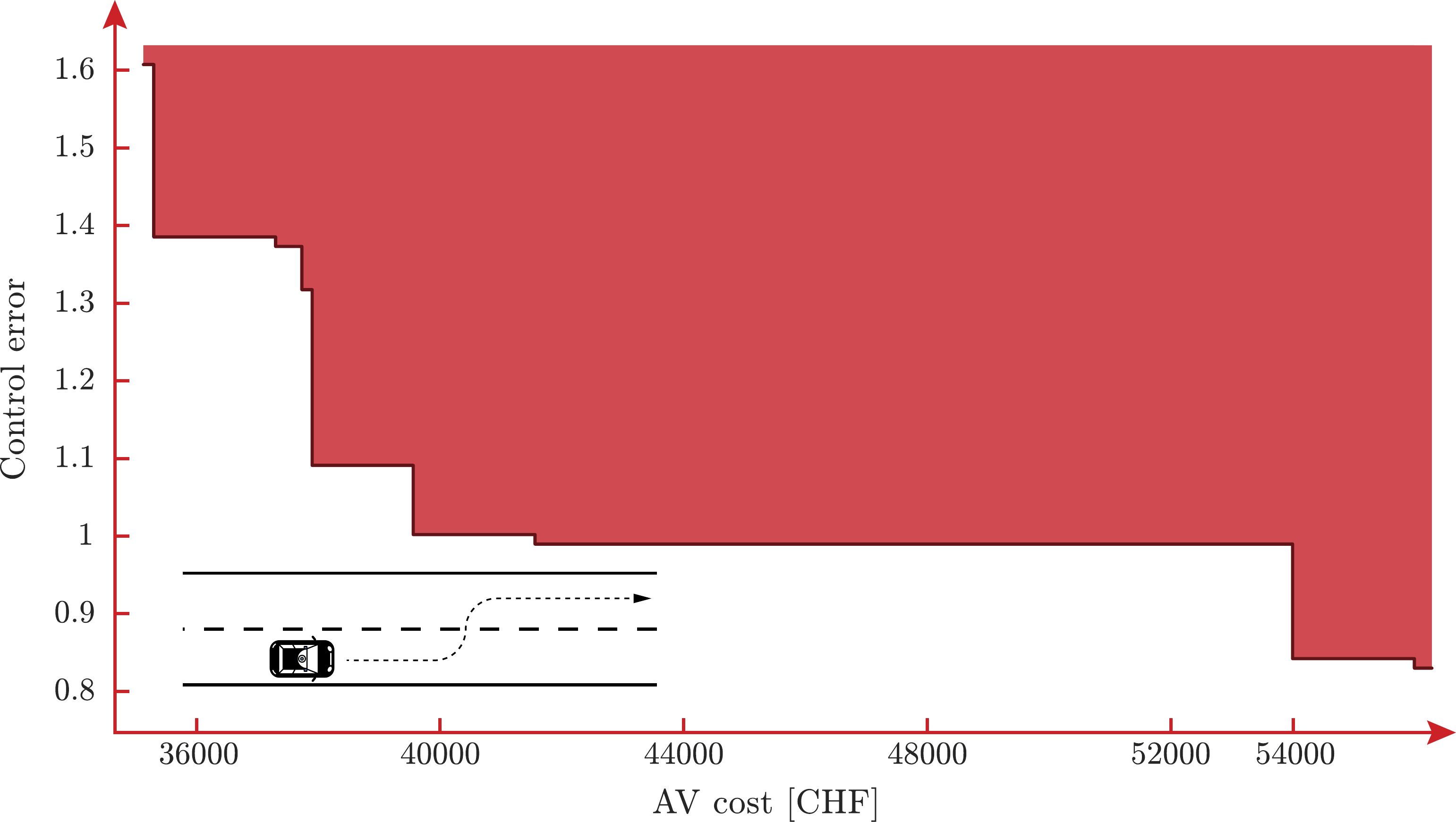}};
\node[blockfill, inner sep=-0.5pt, font=\scriptsize] at (2.4,0.1)
{\begin{tabular}{l}
    NMPC cubic, rear\\
    $\mat{Q}=10.0\cdot \mathbb{1}$\\
    $\mat{R}=1.0$,~$n_\mathrm{h}=10.0$\\
    $k_\mathrm{p}=2$,~$k_\mathrm{I}=k_\mathrm{d}=0.01$\\
    OS2128, KistlerSMotion\\
    Jetson TX1
    \end{tabular}};
\node[blockfill, inner sep=-0.5pt, font=\scriptsize] at (-0.8,1.5) {\begin{tabular}{l}
    Stanley,~$g=0.05$\\
    $k_\mathrm{p}=2.0$,~$k_\mathrm{I}=k_\mathrm{d}=0.01$\\
    Ace7gm, Ace22gm\\
    Jetson Nano
    \end{tabular}};
    \draw[-Triangle,thick] (-2.5,0.7) to[bend left=30] (-2.4,1.5);
    \draw[-Triangle,thick] (4.05,-1.5) to[bend right=30] (3.6,-0.7);
\end{tikzpicture}
    \subcaption{Trade-off (antichain) of cost and control error for lane change with high curvature at \unit[15]{m/s}, with corresponding design choices.}
    \label{fig:lane_cost_error}
\end{subfigure}
\begin{subfigure}[b]{\linewidth}
\includegraphics[width=\linewidth]{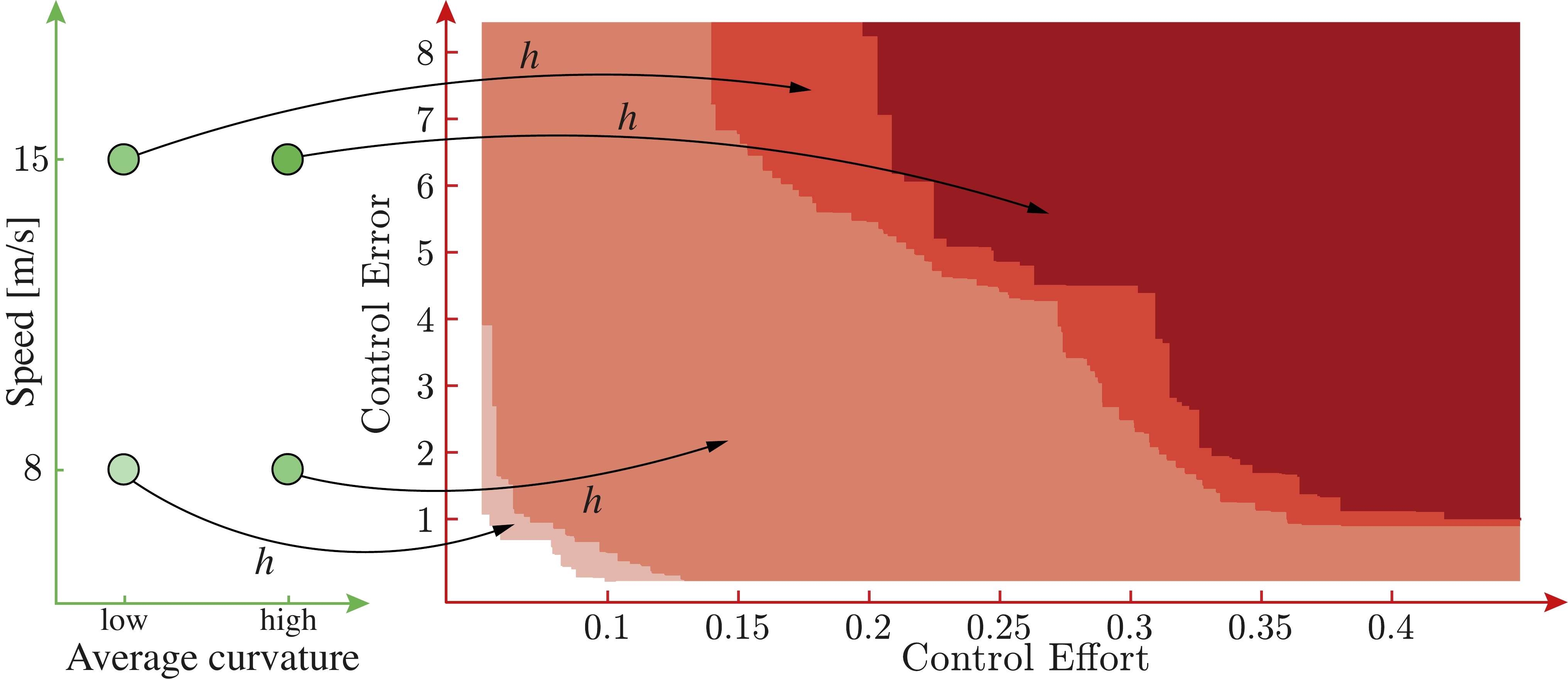}
\subcaption{Monotonicity of the \gls{abk:mdpi}: higher cruise speed or curvature will require higher control error and effort.}
    \label{fig:monotonicity}
\end{subfigure}
\caption{Trade-offs for selected case studies. The red lines represent antichains, and the red areas the corresponding upper sets of resources.}
\end{figure}

\subsubsection{Control effort and control error trade-offs}
We first consider a case in which we want the vehicle to perform a \ang{90} turn with a low curvature, at \unit[8]{m/s}, with a standard battery electric vehicle.
By solving the co-design problem, we obtain a Pareto front of optimal designs, which we can interpret by looking at its 2D-projections.
We first look at the trade-offs between total control effort and total control error (\cref{fig:90_effort_error}).
In red, the Pareto front of optimal solutions, which are not comparable since no instance leads simultaneously to lower control error and control effort.
The upper set of feasible resources is given in solid red.
Furthermore, for each point lying on the Pareto front, we are able to report details about the optimal designs, including the chosen control technique and its parameters, as well as considered sensors and computer.
As one can see in \cref{fig:90_effort_error}, low control effort (discomfort) can be achieved with a specific combination of controllers and parameters, at the cost of an important control error. Similarly, low control error can be achieved by another design, with increased control effort.

\subsubsection{Monetary cost and control error trade-offs}
\label{sec:lane_cost_eff}
We look at the task of lane changing, choosing a high curvature and a speed of \unit[15]{m/s}.
We can now solve the co-design problem with the updated task.
To showcase the richness of the insights we can produce, we now report the trade-offs between monetary costs and control error for the \gls{abk:av} (\cref{fig:lane_cost_error}). 
Clearly, to achieve lower control error one needs to pay more.
Interestingly, investing 39,000 CHF instead of 54,000 CHF only deteriorates the error by 10\%.

\subsubsection{Monotonicity}
We consider the task of lane changing, now showcasing the monotonicity properties of the developed framework.
We look at varying tasks, starting from a low speed of~\unit[8]{m/s} and low curvature and increasing speed to \unit[15]{m/s} and high curvature.
\cref{fig:monotonicity} shows multiple co-design queries.
In particular, for each functionality (left plot), we report the Pareto front and the upper set of optimal resources (right plot), by using the map defined in \cref{def:h_map}.
Monotonicity can be seen in the dominance of subsequent Pareto fronts (right plot), illustrated in increasing red tonality via inclusion of the upper sets.
\section{Conclusions}
We presented how a monotone theory of co-design allows one to simultaneously synthesize state-of-the-art vehicle control systems and design the entire robotic platform.
The proposed approach promotes modularity and compositionality, and captures heterogeneous task-driven design abstraction levels, ranging from control synthesis and parameter tuning to hardware selection.
\paragraph*{Outlook}
The proposed results open the stage for important further developments. 
First, we want to connect the present work with our efforts in modeling sensor curves~\cite{zardiniIros21}, to consider scenarios with moving obstacles and include more granular sensor modeling approaches.
Second, we want to enlarge the library of control schemes which one can embed in the co-design framework (e.g., studying sliding-mode controllers, dynamic switching of controllers), better model digitalisation, include the notion of planning in the models, and present extensive case studies.
Third, we want to leverage our recent results in the co-design theory to query for robustness against multiple, heterogeneous tasks (indeed, one can argue that urban driving will require the ability to solve different ones).
Fourth, we want to study systems which might show complications when embedded in our framework (e.g., Hammerstein-Wiener systems).
Finally, we want to showcase compositionality by connecting the present work to our work in mobility~\cite{zardini2022co}.

\appendix
\label{sec:appendix}
\subsection{Background on orders}
\label{sec:app_order}
\begin{definition}[Poset]
A \emph{\gls{abk:poset}} is a tuple $\mathcal{P}=\langle P,\preceq_\mathcal{P}\rangle$, where $P$ is a set and~$\preceq_\mathcal{P}$ is a partial order, defined as a reflexive, transitive, and antisymmetric relation.
\end{definition}

\begin{definition}[Opposite of a poset]
The \emph{opposite} of a poset~$\langle P,\preceq_\mathcal{P} \rangle$ is the poset $\langle P,\preceq_{\mathcal{P}}^\mathrm{op} \rangle$, which has the same elements as $\mathcal{P}$, and the reverse ordering.
\end{definition}
\begin{definition}[Product poset]
Let~$\tup{P,\preceq_{\mathcal{P}}}$ and $\tup{Q,\preceq_{\mathcal{Q}}}$ be posets. Then,~$\tup{P\times Q,\preceq_{\mathcal{P}\times \mathcal{Q}}}$ is a poset with
\begin{equation*}
    \tup{p_1,q_1}\preceq_{\mathcal{P}\times \mathcal{Q}}\tup{p_2,q_2} \Leftrightarrow (p_1\preceq_{\mathcal{P}}p_2) \wedge (q_1\preceq_\mathcal{Q} q_2).
\end{equation*}
This is called the \emph{product poset} of~$\tup{P,\preceq_{\mathcal{P}}}$ and~$\tup{Q,\preceq_{\mathcal{Q}}}$. 
\end{definition}

\begin{definition}[Monotonicity]
A map~$f\colon \mathcal{P}\to \mathcal{Q}$ between two posets~$\langle P, \preceq_\mathcal{P} \rangle$,~$\langle Q, \preceq_\mathcal{Q} \rangle$ is \emph{monotone} iff~$x\preceq_\mathcal{P} y$ implies $f(x) \preceq_\mathcal{Q} f(y)$. Monotonicity is compositional.
\end{definition}

\subsection{Proof sketches}
\label{sec:proof_sketches}
\begin{proof}[Proof sketch for \cref{lem:estmon}]
One can prove the statement by induction. 
The monotonicity holds in the initialization. 
To prove the induction step, one can first write the prediction update, and leverage properties of the differential Lyapunov equation involved.
Finally, one uses the results of the induction step in the measurement update, to prove the result.
\end{proof}

\begin{proof}[Proof sketch for \cref{lem:loosing_ekf}]
The statement can be proven by following the \emph{substitution principle}~\cite{zardiniecc21}.
If the estimator is given a set of observations, it can simulate having less (i.e., having a higher dropping probability) by artificially ignoring selected samples.
This can also be proven analytically, by comparing measurement updates in the EKF in the two cases.
\end{proof}

\begin{proof}[Proof sketch for \cref{lem:stanley_mon_error}]
First, one can derive the error dynamics
\begin{equation}
\label{eq:error_stanley}
\begin{aligned}
\dot{e}_p(t)&=\frontspeed \sin(\heading_e(t)-\steeringangle(t))+\rho_e(t),\\
\dot{\heading}_e(t)&=-\frontspeed \sin(\steeringangle(t))/L+\bm{w}_\theta,
\end{aligned}
\end{equation}
where~$\rho_e$ and~$\bm{w}_\theta$ are Brownian processes as per given models.
By leveraging properties of the system and measurement noises, one can then show that at any time instant, by larger~$\mat{W}$ or~$\mat{P}$, one cannot obtain a smaller expected total lateral error (to parity of initial condition).
\end{proof}

\begin{proof}[Proof sketch for \cref{lem:stanley_mon_effort}]
The expected lateral positional and orientation errors converging more rapidly to zero imply a commanded steering angle converging more rapidly to zero in expectation. 
This can be formally proven by taking the expectation of the steering angle formulation as presented in the Stanley part.
\end{proof}

\begin{proof}[Proof sketch for \cref{lem:pp_tracking}]
First, one can derive the lateral error dynamics
\begin{equation*}
\dot{e}_p=-\rearspeed e_\mathrm{along}(t)\sin(\steeringangle(t))/L+\rho_e(t),    
\end{equation*}
where~$\rho_e$ is a Brownian process and
\begin{equation*}
e_\mathrm{along}(t)=\begin{bmatrix}
\posxt-\posx(t)&\posyt-\posy(t)
\end{bmatrix}\cdot \begin{bmatrix}
\cos(\heading(t))\\
\sin(\heading(t))
\end{bmatrix}.
\end{equation*}
Then, one can leverage properties of the system and measurement noises to prove the statement.
\end{proof}

\begin{proof}[Proof sketch for \cref{lem:pp_effort}]
This can be shown by following the procedure in \cref{lem:pp_tracking} and looking at the behavior of~$\steeringangle(t)$.
\end{proof}

\begin{proof}[Proof sketch for \cref{lem:lqr_error} and \cref{lem:lqr_effort}]
One can derive the expected error dynamics for the original, nonlinear system, and leveraging properties of the noise perturbations, one can derive both monotonicity results.
\end{proof}

\begin{proof}[Proof sketch for \cref{lem:pid_tracking}]
One can easily write the expression for the expected value of the speed tracking error between any two control steps.
Given the properties of the perturbations in the system model and in the measurement model, one can show that this expression is monotonic in system noise and state estimation uncertainty.
\end{proof}

\begin{proof}[Proof sketch for \cref{lem:pid_effort}]
By explicitly looking at the expression for the expected value of the acceleration resulting from the control, one proceed as for \cref{lem:pid_tracking} and show the monotonicity.
\end{proof}

\begin{proof}[Proof sketch for \cref{lem:nmpc_mon_error} and \cref{lem:nmpc_mon_effort}]
First, one can derive the error dynamics, which are equivalent to \cref{eq:error_stanley}.
One can prove that at each step, the map describing the dependency of the optimal (initial) control input on the initial lateral error is s-shaped (for positive definite~$\mat{Q}$).
Furthermore, in the presence of heading error, the s-shaped curve is translated proportionally along the input axis.
From these facts, one can prove the statements.
\end{proof}

\bibliographystyle{IEEEtran}
\bibliography{paper}

\begin{thebibliography}{10}
\providecommand{\url}[1]{#1}
\csname url@rmstyle\endcsname
\providecommand{\newblock}{\relax}
\providecommand{\bibinfo}[2]{#2}
\providecommand\BIBentrySTDinterwordspacing{\spaceskip=0pt\relax}
\providecommand\BIBentryALTinterwordstretchfactor{4}
\providecommand\BIBentryALTinterwordspacing{\spaceskip=\fontdimen2\font plus
\BIBentryALTinterwordstretchfactor\fontdimen3\font minus
  \fontdimen4\font\relax}
\providecommand\BIBforeignlanguage[2]{{%
\expandafter\ifx\csname l@#1\endcsname\relax
\typeout{** WARNING: IEEEtran.bst: No hyphenation pattern has been}%
\typeout{** loaded for the language `#1'. Using the pattern for}%
\typeout{** the default language instead.}%
\else
\language=\csname l@#1\endcsname
\fi
#2}}

\bibitem{Censi2015}
A.~Censi, ``A mathematical theory of co-design,'' \emph{arXiv preprint
  arXiv:1512.08055}, 2015.

\bibitem{censi2022}
\BIBentryALTinterwordspacing
A.~Censi, J.~Lorand, and G.~Zardini, \emph{Applied Compositional Thinking for
  Engineers}, 2022, work in progress book. [Online]. Available:
  \url{https://bit.ly/3qQNrdR}
\BIBentrySTDinterwordspacing

\bibitem{zardiniecc21}
G.~Zardini, A.~Censi, and E.~Frazzoli, ``Co-design of autonomous systems: From
  hardware selection to control synthesis,'' in \emph{2021 European Control
  Conference (ECC)}, 2021, pp. 682--689.

\bibitem{zardiniIros21}
G.~Zardini, D.~Milojevic, A.~Censi, and E.~Frazzoli, ``Co-design of embodied
  intelligence: A structured approach,'' in \emph{2021 IEEE/RSJ International
  Conference on Intelligent Robots and Systems (IROS)}, 2021, pp. 7536--7543.

\bibitem{zhu2018codesign}
Q.~Zhu and A.~Sangiovanni-Vincentelli, ``Codesign methodologies and tools for
  cyber--physical systems,'' \emph{Proceedings of the IEEE}, vol. 106, no.~9,
  pp. 1484--1500, 2018.

\bibitem{zheng2016cross}
B.~Zheng, P.~Deng, R.~Anguluri, Q.~Zhu, and F.~Pasqualetti, ``Cross-layer
  codesign for secure cyber-physical systems,'' \emph{IEEE Transactions on
  Computer-Aided Design of Integrated Circuits and Systems}, vol.~35, no.~5,
  pp. 699--711, 2016.

\bibitem{Saoud2021}
A.~Saoud, P.~Jagtap, M.~Zamani, and A.~Girard, ``Compositional
  abstraction-based synthesis for interconnected systems: An approximate
  composition approach,'' \emph{IEEE Transactions on Control of Network
  Systems}, vol.~8, no.~2, pp. 702--712, 2021.

\bibitem{seshia2016design}
S.~A. Seshia, S.~Hu, W.~Li, and Q.~Zhu, ``Design automation of cyber-physical
  systems: Challenges, advances, and opportunities,'' \emph{IEEE Transactions
  on Computer-Aided Design of Integrated Circuits and Systems}, vol.~36, no.~9,
  pp. 1421--1434, 2016.

\bibitem{prorok2021beyond}
A.~Prorok, M.~Malencia, L.~Carlone, G.~S. Sukhatme, B.~M. Sadler, and V.~Kumar,
  ``Beyond robustness: A taxonomy of approaches towards resilient multi-robot
  systems,'' \emph{arXiv preprint arXiv:2109.12343}, 2021.

\bibitem{joshi2008}
S.~Joshi and S.~Boyd, ``Sensor selection via convex optimization,'' \emph{IEEE
  Transactions on Signal Processing}, vol.~57, no.~2, pp. 451--462, 2008.

\bibitem{gupta2006}
V.~Gupta, T.~H. Chung, B.~Hassibi, and R.~M. Murray, ``On a stochastic sensor
  selection algorithm with applications in sensor scheduling and sensor
  coverage,'' \emph{Automatica}, vol.~42, no.~2, pp. 251--260, 2006.

\bibitem{tzoumas2020}
V.~Tzoumas, L.~Carlone, G.~J. Pappas, and A.~Jadbabaie, ``Lqg control and
  sensing co-design,'' \emph{IEEE Transactions on Automatic Control}, 2020.

\bibitem{zhang2020}
Y.~Zhang and D.~A. Shell, ``Abstractions for computing all robotic sensors that
  suffice to solve a planning problem,'' in \emph{2020 IEEE International
  Conference on Robotics and Automation (ICRA)}.\hskip 1em plus 0.5em minus
  0.4em\relax IEEE, 2020, pp. 8469--8475.

\bibitem{Shell21}
D.~A. {Shell}, J.~M. {O'Kane}, and F.~Z. {Saberifar}, ``On the design of
  minimal robots that can solve planning problems,'' \emph{IEEE Transactions on
  Automation Science and Engineering}, pp. 1--12, 2021.

\bibitem{smith2021monotonicity}
S.~Smith, A.~Saoud, and M.~Arcak, ``Monotonicity-based symbolic control for
  safety in real driving scenarios,'' 2021.

\bibitem{karaman2012}
S.~Karaman and E.~Frazzoli, ``High-speed motion with limited sensing range in a
  poisson forest,'' in \emph{2012 IEEE 51st IEEE Conference on Decision and
  Control (CDC)}.\hskip 1em plus 0.5em minus 0.4em\relax IEEE, 2012, pp.
  3735--3740.

\bibitem{pant2021co}
Y.~V. Pant, H.~Yin, M.~Arcak, and S.~A. Seshia, ``Co-design of control and
  planning for multi-rotor uavs with signal temporal logic specifications,'' in
  \emph{2021 American Control Conference (ACC)}.\hskip 1em plus 0.5em minus
  0.4em\relax IEEE, 2021, pp. 4209--4216.

\bibitem{tanaka2015}
T.~Tanaka and H.~Sandberg, ``Sdp-based joint sensor and controller design for
  information-regularized optimal lqg control,'' in \emph{2015 54th IEEE
  Conference on Decision and Control (CDC)}.\hskip 1em plus 0.5em minus
  0.4em\relax IEEE, 2015, pp. 4486--4491.

\bibitem{tatikonda2004}
S.~Tatikonda and S.~Mitter, ``Control under communication constraints,''
  \emph{IEEE Transactions on automatic control}, vol.~49, no.~7, pp.
  1056--1068, 2004.

\bibitem{soudbakhsh2013co}
D.~Soudbakhsh, L.~T. Phan, O.~Sokolsky, I.~Lee, and A.~Annaswamy, ``Co-design
  of control and platform with dropped signals,'' in \emph{Proceedings of the
  ACM/IEEE 4th international conference on cyber-physical systems}, 2013, pp.
  129--140.

\bibitem{collin2019phd}
A.~A.~C. Collin, ``A systems architecture framework towards hardware selection
  for autonomous navigation,'' Ph.D. dissertation, Massachusetts Institute of
  Technology, 2019.

\bibitem{collin2020autonomous}
A.~Collin, A.~Siddiqi, Y.~Imanishi, E.~Rebentisch, T.~Tanimichi, and O.~L.
  de~Weck, ``Autonomous driving systems hardware and software architecture
  exploration: optimizing latency and cost under safety constraints,''
  \emph{Systems Engineering}, vol.~23, no.~3, pp. 327--337, 2020.

\bibitem{bravo2020}
G.~Bravo-Palacios, A.~Del~Prete, and P.~M. Wensing, ``One robot for many tasks:
  Versatile co-design through stochastic programming,'' \emph{IEEE Robotics and
  Automation Letters}, vol.~5, no.~2, pp. 1680--1687, 2020.

\bibitem{lahijanian2018}
M.~Lahijanian, M.~Svorenova, A.~A. Morye, B.~Yeomans, D.~Rao, I.~Posner,
  P.~Newman, H.~Kress-Gazit, and M.~Kwiatkowska, ``Resource-performance
  tradeoff analysis for mobile robots,'' \emph{IEEE Robotics and Automation
  Letters}, vol.~3, no.~3, pp. 1840--1847, 2018.

\bibitem{okane2008}
J.~M. O'Kane and S.~M. LaValle, ``Comparing the power of robots,'' \emph{The
  International Journal of Robotics Research}, vol.~27, no.~1, pp. 5--23, 2008.

\bibitem{merlet2005}
J.-P. Merlet, ``Optimal design of robots,'' in \emph{Robotics: Science and
  systems}, 2005.

\bibitem{guo2020actuator}
B.~Guo, O.~Karaca, T.~Summers, and M.~Kamgarpour, ``Actuator placement under
  structural controllability using forward and reverse greedy algorithms,''
  \emph{IEEE Transactions on Automatic Control}, vol.~66, no.~12, pp.
  5845--5860, 2020.

\bibitem{pervan2020algorithmic}
A.~Pervan and T.~D. Murphey, ``Algorithmic design for embodied intelligence in
  synthetic cells,'' \emph{IEEE Transactions on Automation Science and
  Engineering}, vol.~18, no.~3, pp. 864--875, 2020.

\bibitem{davey2002}
B.~A. Davey and H.~A. Priestley, \emph{Introduction to lattices and
  order}.\hskip 1em plus 0.5em minus 0.4em\relax Cambridge university press,
  2002.

\bibitem{paden2016survey}
B.~Paden, M.~{\v{C}}{\'a}p, S.~Z. Yong, D.~Yershov, and E.~Frazzoli, ``A survey
  of motion planning and control techniques for self-driving urban vehicles,''
  \emph{IEEE Transactions on intelligent vehicles}, vol.~1, no.~1, pp. 33--55,
  2016.

\bibitem{kong2015kinematic}
J.~Kong, M.~Pfeiffer, G.~Schildbach, and F.~Borrelli, ``Kinematic and dynamic
  vehicle models for autonomous driving control design,'' in \emph{2015 IEEE
  intelligent vehicles symposium (IV)}.\hskip 1em plus 0.5em minus 0.4em\relax
  IEEE, 2015, pp. 1094--1099.

\bibitem{censi2010kalman}
A.~Censi, ``Kalman filtering with intermittent observations: Convergence for
  semi-markov chains and an intrinsic performance measure,'' \emph{IEEE
  Transactions on Automatic Control}, vol.~56, no.~2, pp. 376--381, 2010.

\bibitem{rokonuzzaman2021review}
M.~Rokonuzzaman, N.~Mohajer, S.~Nahavandi, and S.~Mohamed, ``Review and
  performance evaluation of path tracking controllers of autonomous vehicles,''
  \emph{IET Intelligent Transport Systems}, vol.~15, no.~5, pp. 646--670, 2021.

\bibitem{coulter1992implementation}
R.~C. Coulter, ``Implementation of the pure pursuit path tracking algorithm,''
  Carnegie-Mellon UNIV Pittsburgh PA Robotics INST, Tech. Rep., 1992.

\bibitem{censi2017uncertainty}
A.~Censi, ``Uncertainty in monotone codesign problems,'' \emph{IEEE Robotics
  and Automation Letters}, vol.~2, no.~3, pp. 1556--1563, 2017.

\bibitem{zardiniAnnuRev22}
G.~Zardini, N.~Lanzetti, M.~Pavone, and E.~Frazzoli, ``Analysis and control of
  autonomous mobility-on-demand systems,'' \emph{Annual Review of Control,
  Robotics, and Autonomous Systems}, vol.~5, no.~1, 2022.

\bibitem{zardini2022co}
G.~Zardini, N.~Lanzetti, A.~Censi, E.~Frazzoli, and M.~Pavone, ``Co-design to
  enable user-friendly tools to assess the impact of future mobility
  solutions,'' \emph{arXiv preprint arXiv:2008.08975}, 2022.

\bibitem{althoff2017commonroad}
M.~Althoff, M.~Koschi, and S.~Manzinger, ``Commonroad: Composable benchmarks
  for motion planning on roads,'' in \emph{2017 IEEE Intelligent Vehicles
  Symposium (IV)}.\hskip 1em plus 0.5em minus 0.4em\relax IEEE, 2017, pp.
  719--726.

\end{thebibliography}

\end{document}